\begin{document}
\newcommand{\junk}[1]{}

\newtheorem{lemma}{Lemma}
\newtheorem{theorem}[lemma]{Theorem}
\newtheorem{informaltheorem}[lemma]{Informal Theorem}
\newtheorem{informallemma}[lemma]{Informal Lemma}
\newtheorem{corollary}[lemma]{Corollary}
\newtheorem{definition}[lemma]{Definition}
\newtheorem{proposition}[lemma]{Proposition}
\newtheorem{question}{Question}
\newtheorem{problem}{Problem}
\newtheorem{remark}[lemma]{Remark}
\newtheorem{claim}{Claim}
\newtheorem{fact}{Fact}
\newtheorem{challenge}{Challenge}
\newtheorem{observation}{Observation}
\newtheorem{openproblem}{Open Problem}
\newtheorem{openquestion}{Open question}
\newtheorem{conjecture}{Conjecture}
\newtheorem{game}{Game}

\newcommand{\beq}{\begin{equation}}
\newcommand{\eeq}{\end{equation}}
\newcommand{\beas}{\begin{eqnarray*}}
\newcommand{\eeas}{\end{eqnarray*}}

\newcommand{\poly}{\mathrm{poly}}
\newcommand{\eps}{\epsilon}
\newcommand{\e}{\epsilon}
\newcommand{\polylog}{\mathrm{polylog}}
\newcommand{\rob}[1]{\left( #1 \right)} 
\newcommand{\sqb}[1]{\left[ #1 \right]} 
\newcommand{\cub}[1]{\left\{ #1 \right\} } 
\newcommand{\rb}[1]{\left( #1 \right)} 
\newcommand{\abs}[1]{\left| #1 \right|} 
\newcommand{\zo}{\{0, 1\}}
\newcommand{\zonzo}{\zo^n \to \zo}
\newcommand{\zokzo}{\zo^k \to \zo}
\newcommand{\zot}{\{0,1,2\}}

\newcommand{\en}[1]{\marginpar{\textbf{#1}}}
\newcommand{\efn}[1]{\footnote{\textbf{#1}}}

\newcommand{\prob}[1]{\Pr \left[ #1 \right]}
\newcommand{\dprob}[3]{\Pr \left[ #1 ; #2 \mbox{;} #3 \right]}
\newcommand{\dsumprob}[2]{P(#1,#2)}
\newcommand{\expect}[1]{\mathbb{E}\left[ #1 \right]}
\newcommand{\mindegree}{\delta}

\newcommand{\BfPara}[1]{\noindent {\bf #1}.}

\newcommand{\cost}{\mbox{cost}}
\newcommand{\GNS}[1]{\mbox{{\sc GNS}}(#1)}

\begin{titlepage}

\title{Information Spreading in Dynamic Networks}

\author{Chinmoy Dutta \thanks{College of Computer and Information
    Science, Northeastern University, Boston MA 02115, USA.  E-mail:
    {\tt \{chinmoy,rraj,austin\}@ccs.neu.edu}.  Chinmoy Dutta is
    supported in part by NSF grant CCF-0845003 and a Microsoft grant
    to Ravi Sundaram; Rajmohan Rajaraman and Zhifeng Sun are supported
    in part by NSF grant CNS-0915985.} \and Gopal Pandurangan
  \thanks{Division of Mathematical Sciences, Nanyang Technological
    University, Singapore 637371 and Department of Computer Science,
    Brown University, Providence, RI 02912, USA.  \hbox{E-mail}:~{\tt
      gopalpandurangan@gmail.com}. Supported in part by the following
    grants: Nanyang Technological University grant M58110000,
    Singapore Ministry of Education (MOE) Academic Research Fund
    (AcRF) Tier 2 grant MOE2010-T2-2-082, US NSF grants CCF-1023166
    and CNS-0915985, and a grant from the US-Israel Binational Science
    Foundation (BSF).}  \and Rajmohan Rajaraman~$^\ast$ \and Zhifeng
  Sun~$^\ast$}

\date{}

\maketitle

\thispagestyle{empty}

\begin{abstract}
We study the fundamental problem of information spreading (also known
as gossip) in dynamic networks.  In gossip, or more generally,
$k$-gossip, there are $k$ pieces of information (or tokens) that are
initially present in some nodes and the problem is to disseminate the
$k$ tokens to all nodes.  The goal is to accomplish the task in as few
rounds of distributed computation as possible.  The problem is
especially challenging in dynamic networks where the network topology
can change from round to round and can be controlled by an on-line
adversary.

The focus of this paper is on the power of token-forwarding
algorithms, which do not manipulate tokens in any way other than
storing and forwarding them.  We first consider a worst-case
adversarial model first studied by Kuhn, Lynch, and
Oshman~\cite{kuhn+lo:dynamic} in which the communication links for
each round are chosen by an adversary, and nodes do not know who their
neighbors for the current round are before they broadcast their
messages. \junk{The model allows the study of the fundamental
  computation power of dynamic networks.}Our main result is an
$\Omega(nk/\log n)$ lower bound on the number of rounds needed for any
deterministic token-forwarding algorithm to solve $k$-gossip.  This
resolves an open problem raised in~\cite{kuhn+lo:dynamic}, improving
their lower bound of $\Omega(n \log k)$, and matching their upper
bound of $O(nk)$ to within a logarithmic factor.  Our lower bound also
extends to randomized algorithms against an adversary that knows in
each round the outcomes of the random coin tosses in that round.  Our
result shows that one cannot obtain significantly efficient (i.e.,
subquadratic) token-forwarding algorithms for gossip in the
adversarial model of~\cite{kuhn+lo:dynamic}.  We next show that
token-forwarding algorithms can achieve subquadratic time in the
offline version of the problem, where the adversary has to commit all
the topology changes in advance at the beginning of the
computation. We present two polynomial-time offline token-forwarding
algorithms to solve $k$-gossip: (1) an $O(\min\{nk, n\sqrt{k \log
  n}\})$ round algorithm, and (2) an $(O(n^\eps), \log n)$ bicriteria
approximation algorithm, for any $\eps > 0$, which means that if $L$
is the number of rounds needed by an optimal algorithm, then our
approximation algorithm will complete in $O(n^\eps L)$ rounds and the
number of tokens transmitted on any edge is $O(\log n)$ in each
round. Our results are a step towards understanding the power and
limitation of token-forwarding algorithms in dynamic networks.
\end{abstract}

{\bf Keywords:} Dynamic networks, Distributed Computation, Information
Spreading, Gossip, Lower Bounds
  
\end{titlepage}

\section{Introduction}
In a dynamic network, nodes (processors/end hosts) and communication
links can appear and disappear at will over time.  Emerging networking
technologies such as ad hoc wireless, sensor, and mobile networks,
overlay and peer-to-peer (P2P) networks are inherently dynamic,
resource-constrained, and unreliable.  This necessitates the
development of a solid theoretical foundation to design efficient,
robust, and scalable distributed algorithms and to understand the
power and limitations of distributed computing on such networks.  Such
a foundation is critical to realize the full potential of these
large-scale dynamic communication networks.

As a step towards understanding the fundamental computation power of
dynamic networks, we investigate dynamic networks in which the network
topology changes arbitrarily from round to round.  We first consider a
worst-case model that was introduced by Kuhn, Lynch, and
Oshman~\cite{kuhn+lo:dynamic} in which the communication links for
each round are chosen by an online adversary, and nodes do not know
who their neighbors for the current round are before they broadcast
their messages. (Note that in this model, only edges change and nodes
are assumed to be fixed.)  The only constraint on the adversary is
that the network should be connected at each round.  Unlike prior
models on dynamic networks, the model of~\cite{kuhn+lo:dynamic} does
not assume that the network eventually stops changing and requires
that the algorithms work correctly and terminate even in networks that
change continually over time.

In this paper, we study the fundamental problem of information
spreading (also known as gossip).  In gossip, or more generally,
$k$-gossip, there are $k$ pieces of information (or tokens) that are
initially present in some nodes and the problem is to disseminate the
$k$ tokens to all nodes.  (By just gossip, we mean $n$-gossip, where
$n$ is the network size.)  Information spreading is a fundamental
primitive in networks which can be used to solve other problem such as
broadcasting and leader election. Indeed, solving $n$-gossip, where
the number of tokens is equal to the number of nodes in the network,
and each node starts with exactly one token, allows any function of
the initial states of the nodes to be computed, assuming that the
nodes know $n$~\cite{kuhn+lo:dynamic}.  

\subsection{Our results}
The focus of this paper is on the power of {\em token-forwarding}\/
algorithms, which do not manipulate tokens in any way other than
storing and forwarding them.  Token-forwarding algorithms are simple,
often easy to implement, and typically incur low overhead.  In a key
result,~\cite{kuhn+lo:dynamic} showed that under their adversarial
model, $k$-gossip can be solved by token-forwarding in $O(nk)$ rounds,
but that any deterministic online token-forwarding algorithm needs
$\Omega(n \log k)$ rounds.  They also proved an $\Omega(nk)$ lower
bound for a special class of token-forwarding algorithms, called
knowledge-based algorithms.  Our main result is a new lower bound that
applies to {\em any}\/ deterministic online token-forwarding algorithm
for $k$-gossip.
\begin{itemize}
\item
We show that every online algorithm for the $k$-gossip problem takes
$\Omega(nk/\log n)$ rounds against an adversary that, at the start of
each round, knows the randomness used by the algorithm in the round.
This also implies that any deterministic online token-forwarding
algorithm takes $\Omega(nk/\log n)$ rounds.  Our result applies even
to centralized token-forwarding algorithms that have a global
knowledge of the token distribution.
\end{itemize}
This result resolves an open problem raised in~\cite{kuhn+lo:dynamic},
significantly improving their lower bound, and matching their upper
bound to within a logarithmic factor.  Our lower bound also enables a
better comparison of token-forwarding with an alternative approach
based on network coding due to
~\cite{haeupler:gossip,haeupler+k:dynamic}, which achieves a
$O(nk/\log n)$ rounds using $O(\log n)$-bit messages (which is not
significantly better than the $O(nk)$ bound using token-forwarding),
and $O(n + k)$ rounds with large message sizes (e.g., $\Theta(n \log
n)$ bits).  It thus follows that for large token and message sizes
there is a factor $\Omega(\min\{n,k\}/\log n)$ gap between
token-forwarding and network coding. We note that in our model we
allow only one token per edge per round and thus our bounds hold
regardless of the token size.

Our lower bound indicates that one cannot obtain efficient (i.e.,
subquadratic) token-forwarding algorithms for gossip in the
adversarial model of~\cite{kuhn+lo:dynamic}.  Furthermore, for
arbitrary token sizes, we do not know of any algorithm that is
significantly faster than quadratic time.  This motivates considering
other weaker (and perhaps, more realistic) models of dynamic networks.
In fact, it is not clear whether one can solve the problem
significantly faster even in an offline setting, in which the network
can change arbitrarily each round, but the entire evolution is known
to the algorithm in advance.  Our next contribution takes a step in
resolving this basic question for token-forwarding algorithms.
\begin{itemize}
\item
We present a polynomial-time offline token-forwarding algorithm that
solves the $k$-gossip problem on an $n$-node dynamic network in
$O(\min\{nk, n \sqrt{k \log n}\})$ rounds with high probability.
\item
We also present a polynomial-time offline token-forwarding algorithm
that solves the $k$-gossip problem in a number of rounds within an
$O(n^\eps)$ factor of the optimal, for any $\eps > 0$, assuming the
algorithm is allowed to transmit $O(\log n)$ tokens per round.
\end{itemize}
The above upper bounds show that in the offline setting,
token-forwarding algorithms can achieve a time bound that is within
$O(\sqrt{k\log n})$ of the information-theoretic lower bound of
$\Omega(n + k)$, and that we can approximate the best token-forwarding
algorithm to within a $O(n^\eps)$ factor, given logarithmic extra
bandwidth per edge.

\subsection{Related work}
Information spreading (or dissemination) in networks is one of the
most basic problems in computing and has a rich literature. \junk{Here
  we focus mostly on work that is relevant to our work.} The problem
is generally well-understood on static networks, both for
interconnection networks~\cite{leighton:book} as well as general
networks~\cite{lynch:distributed,attiya+w:distributed}.  In
particular, the $k$-gossip problem can be solved in $O(n + k)$ rounds
on any $n$-static network~\cite{topkis:disseminate}.  There also have
been several papers on broadcasting, multicasting, and related
problems in static heterogeneous and wireless networks (e.g.,
see~\cite{alon+blp:radio,bar-yehuda+gi:radio,bar-noy+gns:multicast,clementi+ms:radio}).

Dynamic networks have been studied extensively over the past three
decades.  Some of the early studies focused on dynamics that arise out
of faults, i.e., when edges or nodes fail.  A number of fault models,
varying according to extent and nature (e.g., probabilistic
vs.\ worst-case) and the resulting dynamic networks have been analyzed
(e.g., see~\cite{attiya+w:distributed,lynch:distributed}).  There have
been several studies on models that constrain the rate at which
changes occur, or assume that the network eventually stabilizes (e.g.,
see~\cite{afek+ag:dynamic,dolev:stabilize,gafni+b:link-reversal}).

There also has been considerable work on general dynamic networks.
Some of the earliest studies in this area
include~\cite{afek+gr:slide,awerbuch+pps:dynamic} which introduce
general building blocks for communication protocols on dynamic
networks.  \junk{Subsequently, a number of different problems have
  been studied on dynamic and asynchronous networks, including
  routing, load balancing, multicast, anycast, and several fundamental
  distributed computing problems.}  Another notable work is the local
balancing approach of~\cite{awerbuch+l:flow} for solving routing and
multicommodity flow problems on dynamic networks.  Algorithms based on
the local balancing approach continually balance the packet queues
across each edge of the network and drain packets that have reached
their destination.  \junk{It has been shown that assuming the queues
  at the nodes can hold enough packets, the local balancing approach
  can achieve throughput that is arbitrarily close to the optimal
  achievable by any offline algorithm.}  The local balancing approach
has been applied to achieve near-optimal throughput for multicast,
anycast, and broadcast problems on dynamic networks as well as for
mobile ad hoc
networks~\cite{awerbuch+bbs:route,awerbuch+bs:anycast,jia+rs:adhoc}.

Modeling general dynamic networks has gained renewed attention with
the recent advent of heterogeneous networks composed out of ad hoc,
and mobile devices.  To address the unpredictable and often unknown
nature of network dynamics,~\cite{kuhn+lo:dynamic} introduce a model
in which the communication graph can change completely from one round
to another, with the only constraint being that the network is
connected at each round.  The model of~\cite{kuhn+lo:dynamic} allows
for a much stronger adversary than the ones considered in past work on
general dynamic
networks~\cite{awerbuch+l:flow,awerbuch+bbs:route,awerbuch+bs:anycast}.
In addition to results on the $k$-gossip problem that we have
discussed earlier,~\cite{kuhn+lo:dynamic} consider the related problem
of counting, and generalize their results to the $T$-interval
connectivity model, which includes an additional constraint that any
interval of $T$ rounds has a stable connected spanning subgraph.  The
survey of~\cite{kuhn-survey} summarizes recent work on dynamic
networks.
 
We note that the model of~\cite{kuhn+lo:dynamic}, as well as ours,
allow only edge changes from round to round while the nodes remain
fixed. Recently, the work of \cite{p2p-soda} introduced a dynamic
network model (motivated by P2P networks) where both nodes and edges
can change by a large amount (up to a linear fraction of the network
size). They show that stable almost-everywhere agreement can be
efficiently solved in such networks even in adversarial dynamic
settings.  \junk{As in the Kuhn et al. model, the algorithms in
  \cite{p2p-soda} will work and terminate correctly even when the
  network keeps continually changing.  We note that there has been
considerable prior work in dynamic P2P networks (see \cite{p2p-soda,
  p2p-focs} and the references therein) but these don't assume that
the network keeps continually changing over time.}

Recent work of~\cite{haeupler:gossip,haeupler+k:dynamic} presents
information spreading algorithms based on network
coding~\cite{ahlswede+cly:coding}.  As mentioned earlier, one of their
important results is that the $k$-gossip problem on the adversarial
model of~\cite{kuhn+lo:dynamic} can be solved using network coding in
$O(n+k)$ rounds assuming the token sizes are sufficiently large
($\Omega(n\log n)$ bits). For further references to using network
coding for gossip and related problems, we refer to the recent works
of
~\cite{haeupler:gossip,haeupler+k:dynamic,avin1,avin2,deb+mc:coding,shah}
and the references therein.

Our offline approximation algorithm makes use of results on the
Steiner tree packing problem for directed
graphs~\cite{cheriyan+s:steiner}.  This problem is closely related to
the directed Steiner tree problem (a major open problem in
approximation
algorithms)~\cite{charikar+ccdgg:steiner,zosin+k:steiner} and the gap
between network coding and flow-based solutions for multicast in
arbitrary directed networks~\cite{agarwal+c:coding,sanders+et:flow}.

Finally, we note that there are also a number of studies that solve
$k$-gossip and related problems using {\em gossip-based}\/ processes.
In a local gossip-based algorithm, each node exchanges information
with a small number of randomly chosen neighbors in each round.
Gossip-based processes have recently received significant attention
because of their simplicity of implementation, scalability to large
network size, and their use in aggregate computations,
e.g.,~\cite{berenbrink+ceg:gossip,demers,kempe1,chen-spaa,karp,shah,boyd}
and the references therein.  All these studies assume an underlying
static communication network, and do not apply directly to the models
considered in this paper.  A related recent work on dynamic networks
is~\cite{avin+kl:dynamic} which analyzes the cover time of random
walks on dynamic networks.

\section{Model and problem statement}
\label{sec:pre}
In this section, we formally define the $k$-gossip problem, the online
and offline models, and token-forwarding algorithms.  

\smallskip
{\noindent {\sl The $k$-gossip problem.}} In this problem, $k$
different tokens are assigned to a set $V$ of $n \ge k$ nodes, where
each node may have any subset of the tokens, and the goal is to
disseminate all the $k$ tokens to all the nodes.

\smallskip
{\noindent {\sl The online model.}}  Our online model is the
worst-case adversarial model of~\cite{kuhn+lo:dynamic}.  Nodes
communicate with each other using anonymous broadcast.  We assume a
synchronized communication.  At the beginning of round $r$, each node
in $V$ decides what message to broadcast based on its internal state
and coin tosses (for a randomized algorithm); the adversary chooses
the set of edges that forms the communication network $G_r$ over $V$
for round $r$.  We adopt a {\em strong adversary}\/ model in which
adversary knows the outcomes of the random coin tosses used by the
algorithm in round $r$ at the time of constructing $G_r$ but is
unaware at this time of the outcomes of any randomness used by the
algorithm in future rounds.  The only constraint on $G_r$ is that it
be connected; this is the same as the $1$-interval connectivity model
of~\cite{kuhn+lo:dynamic}.  

As observed in~\cite{kuhn+lo:dynamic}, the above model is equivalent
to the adversary knowing the messages to be sent in round $r$ before
choosing the edges for round $r$.  We do not place any bound on the
size of the messages, but require for our lower bound that each
message contains at most one token.  Finally, we note that under the
strong adversary model, there is a distinction between randomized
algorithms and deterministic algorithms since a randomized algorithm
may be able to exploit the fact that in any round $r$, while the
adversary is aware of the randomness used in that round, it does not
know the outcomes of any randomness used in subsequent rounds.

\smallskip
{\noindent {\sl The offline model.}} In the offline model, we are
given a sequence of networks $\langle G_r \rangle$ where $G_r$ is a
connected communication network for round $r$.  As in the online
model, we assume that in each round at most one token is broadcast by
any node.  It can be easily seen that the $k$-gossip problem can be
solved in $nk$ rounds in the offline model; so we may assume that the
given sequence of networks is of length at most $nk$.

\smallskip
{\noindent {\sl Token-forwarding algorithms.}} Informally, a
token-forwarding algorithm is one that does not combine or alter
tokens, only stores and forwards them.  Formally, we call an algorithm
for $k$-gossip a token-forwarding algorithm if for every node $v$,
token $t$, and round $r$, $v$ contains $t$ at the start of round $r$
of the algorithm if and only if either $v$ has $t$ at the start of the
algorithm or $v$ received a message containing $t$ prior to round $r$.

Finally, several of our arguments are probabilistic.  We use the term
``with high probability'' to mean with probability at least $1 -
1/n^c$, for a constant $c$ that can be made sufficiently high by
adjusting related constant parameters.

\junk{
\begin{problem}[$k$-token dissemination]
\label{prob:ktoken}
\end{problem}

We assume synchronized communication between nodes, and the message
size is $O(\log n)$ where $n$ is the number of nodes in the graph. The
dynamic graphs are provided by adversaries. We consider the following
3 kinds of adversaries.
\begin{enumerate}
\item {\sc Strong Adversary}: At each round of communication, each
  node decides which token to send first. The adversary knows which
  node is going to send which token and the set of tokens each node
  has, and then the adversary provides a connected graph as the
  communication graph.
\item {\sc Weak Adversary}: At each round of communication, the
  adversary provides a connected graph as the communication graph
  first. Each node knows who are his neighbors, and then decides which
  token to send. The adversary knows the set of tokens each node has
  at any time.
\item {\sc Oblivious Adversary}: Before the token dissemination
  process starts, the adversary has to provide a sequence of graphs
  for all rounds of communications. 
\end{enumerate}
}

\section{Lower bound for online token-forwarding algorithms}
\label{sec:lower}
In this section, we give an $\Omega(kn/\log n)$ lower bound on the
number of rounds needed by any online token-forwarding algorithm for
the $k$-gossip problem against a strong adversary.  As discussed
earlier, this immediately implies the same lower bound for any
deterministic online token-forwarding algorithm.  Our lower bound
applies to even centralized algorithms and a large class of initial
token distributions.  We first describe the adversary strategy.

\smallskip
\noindent 
{\em Adversary:} The strategy of the adversary is simple.  We use the
notion of {\em free edge}\/ introduced in~\cite{kuhn+lo:dynamic}.  In
a given round $r$, we call an edge $(u,v)$ to be a free edge if at the
start of round $r$, $u$ has the token that $v$ broadcasts in the round
and $v$ has the token that $u$ broadcasts in the round\footnote{For
  convenience, when a node does not broadcast any token we will view
  it as broadcasting a special {\em empty}\/ token that every node
  has.  This allows us to avoid treating the empty broadcast as a
  special case.}; an edge that is not free is called {\em non-free}.
Thus, if $(u,v)$ is a free edge in a particular round, neither $u$ nor
$v$ can gain any new token through this edge in the round.  Since we
are considering a strong adversary model, at the start of each round,
the adversary knows for each node $v$, the token (if any) that $v$
will broadcast in that round.  In round $r$, the adversary constructs
the communication graph $G_r$ as follows.  First, the adversary adds
all the free edges to $G_r$.  Let $C_1,C_2,\dots,C_l$ denote the
connected components thus formed. The adversary then guarantees the
connectivity of the graph by selecting an arbitrary node in each
connected component and connecting them in a line.  Figure
\ref{fig:adversary} illustrates the construction.

The network $G_r$ thus constructed has exactly $l - 1$ non-free edges,
where $l$ is the number of connected components formed by the free
edges of $G_r$.  If $(u,v)$ is a non-free edge in $G_r$, then $u$,
$v$, or both will gain at most new token through this edge. We refer
to such a token exchange on a non-free edge as a {\em useful token
  exchange}.

We bound the running-time of any token-forwarding algorithm by
identifying a critical structure that quantifies the progress made in
each round.  We say that a sequence of nodes $v_1,v_2,\ldots,v_k$ is
{\em half-empty}\/ in round $r$ with respect to a sequence of tokens
$t_1,t_2,\ldots,t_k$ if the following condition holds at the start of
round $r$: for all $1 \le i,j \le k$, $i \neq j$, either $v_i$ is
missing $t_j$ or $v_j$ is missing $t_i$.  We then say that $\langle
v_i \rangle$ is half-empty with respect to $\langle t_i \rangle$ and
refer to the pair $(\langle v_i \rangle, \langle t_i \rangle)$ as a
half-empty configuration of size $k$.

\begin{figure}[ht]
\begin{center}
\includegraphics[width=5in]{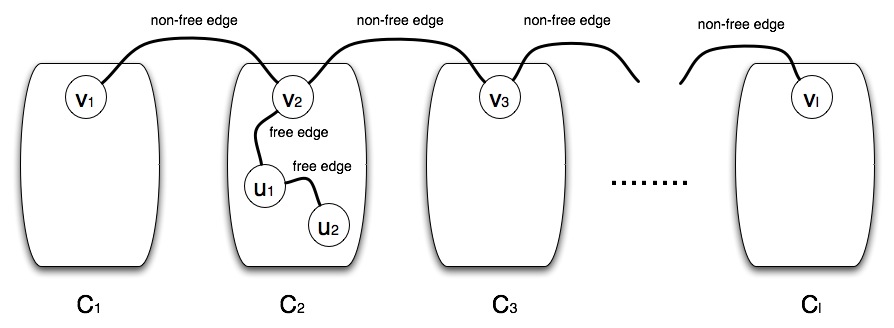}
\caption{The network constructed by the adversary in a particular
  round.  Note that if node $v_i$ broadcasts token $t_i$, then the
  $\langle v_i \rangle$ forms a half-empty configuration with respect
  to $\langle t_i \rangle$ at the start of this round.}
\label{fig:adversary}
\end{center}
\end{figure}

\junk{To prove the lower bound, we consider a special starting state, where
each node $u$ has token $t_i$ with probability $3/4$ for all $i$. Then
we look for certain structure in this starting state, and argue that
with such structure the number of useful token exchanges is $O(\log
n)$ with high probability. Furthermore, we argue that such structure
will always exist in the following rounds of communication. That
means, the number of useful token exchanges will be $O(\log n)$ in
every round. Since each node $u$ has any token with probability $3/4$
at the starting point, the expected number of missing token is $n\cdot
k/4 = \Omega(kn)$. In fact, we argue the number of missing token is
$\Omega(kn)$ with high probability using Chernoff bound. Thus, this
will imply any centralized deterministic algorithm runs in
$\Omega(kn/\log n)$ rounds.

First, we draw the connection between the number of useful token
exchanges and the existence of certain structure in Lemma
\ref{lem:free+edge}. Then in Lemma \ref{lem:alg+lower} we show the
$\Omega(kn/\log n)$ lower bound if the token dissemination process
starts from the state where each node $u$ has token $t_i$ with
probability $3/4$ for all $i$. Lastly, in Theorem
\ref{thm:alg+lower.single} we prove our lower bound.
\begin{lemma}
\label{lem:free+edge}
If $m$ or more useful token exchanges occur, then there exist $m/2+1$
nodes $v_1,v_2,\dots,v_{m/2+1}$ and $m/2+1$ tokens
$t_1,t_2,\dots,t_{m/2+1}$ ($v_i$ broadcasts token $t_i$) such that,
for all $i<j$, either $v_i$ is missing $t_j$ or $v_j$ is missing
$t_i$.
\end{lemma}
}

\begin{lemma}
\label{lem:free+edge}
If $m$ useful token exchanges occur in round $r$, then there exists a
half-empty configuration of size at least $m/2 + 1$ at the start of
round $r$.
\end{lemma}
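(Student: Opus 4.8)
The plan is to recast the half-empty condition as an independence condition in an auxiliary graph and then exploit the component structure produced by the adversary. First I would introduce the \emph{free graph} $H$ on the vertex set $V$, placing an edge between $u$ and $v$ exactly when, at the start of round $r$, $u$ holds the token broadcast by $v$ and $v$ holds the token broadcast by $u$ --- that is, exactly when $(u,v)$ is a free edge. The crucial observation is that ``half-empty'' is precisely the pairwise negation of ``free'': for two nodes $v_i, v_j$ broadcasting tokens $t_i, t_j$, the pair \emph{violates} the half-empty condition (meaning $v_i$ has $t_j$ and $v_j$ has $t_i$) if and only if $\{v_i, v_j\}$ is an edge of $H$. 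Consequently, a sequence of nodes together with the tokens they broadcast forms a half-empty configuration if and only if those nodes form an independent set of $H$.

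Next I would invoke the adversary's construction. Since $H$ is exactly the graph of all free edges, its connected components are the components $C_1, \dots, C_l$ formed in building $G_r$. Choosing one node from each component yields $l$ nodes lying in distinct components, hence pairwise non-adjacent in $H$; by the observation above, these $l$ nodes together with the tokens they broadcast constitute a half-empty configuration of size $l$. Note that this configuration exists unconditionally, independent of how many tokens actually move.

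It then remains to relate $l$ to the number $m$ of useful token exchanges. Using the fact, already established in the construction, that $G_r$ has exactly $l-1$ non-free edges (the line edges joining the components --- which are indeed non-free, since any free pair would lie in a common $C_i$), and noting that each non-free edge delivers at most one new token to each of its two endpoints, I get $m \le 2(l-1)$, whence $l \ge m/2 + 1$. The half-empty configuration of size $l$ produced above is therefore of size at least $m/2 + 1$, as required (truncating to any $m/2 + 1$ of the chosen nodes if an exact size is wanted).

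I do not anticipate a serious obstacle: the whole argument turns on recognizing that the half-empty condition is the logical complement of the free-edge condition at the level of pairs, after which the transversal bound and the edge count $m \le 2(l-1)$ are routine. The only point demanding a little care is confirming that every useful exchange is accounted for by a line edge and that each line edge contributes at most two such exchanges, so that the inequality $m \le 2(l-1)$ is tight enough to yield $l \ge m/2 + 1$.
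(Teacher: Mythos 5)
Your proof is correct and follows essentially the same route as the paper's: both arguments observe that the half-empty condition is exactly pairwise non-freeness, pick one representative per connected component of the free-edge subgraph, and bound the number of components below by $m/2+1$ using the fact that each of the $l-1$ non-free line edges accounts for at most two useful exchanges. The independent-set reformulation via the auxiliary graph $H$ is a clean repackaging but not a substantively different argument.
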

\begin{proof}
Consider the network $G_r$ in round $r$.  Each non-free edge can
contribute at most 2 useful token exchanges. Thus, there are at least
$m/2$ non-free edges in the communication graph.  Based on the
adversary we consider, no useful token exchange takes place within the
connected components induced by the free edges. Useful token exchanges
can only happen over the non-free edges between connected
components. This implies there are at least $m/2+1$ connected
components in the subgraph of $G_r$ induced by the free edges.  Let
$v_i$ denote an arbitrary node in the $i$th connected component in
this subgraph, and let $t_i$ be the token broadcast by $v_i$ in round
$r$.  For $i \neq j$, since $v_i$ and $v_j$ are in different connected
components, $(v_i,v_j)$ is a non-free edge in round $r$; hence, at the
start of round $r$, either $v_i$ is missing $t_j$ or $v_j$ is missing
$t_i$.  Thus, the sequence $\langle v_i \rangle$ of nodes of size at
least $m/2 + 1$ is half-empty with respect to the sequence $\langle
t_i \rangle$ at the start of round $r$.
\end{proof}
An important point to note about the definition of a half-empty configuration
is that it only depends on the token distribution; it is independent
of the broadcast in any round.  This allows us to prove the following
easy lemma.
\begin{lemma}
\label{lem:monotone}
If a sequence $\langle v_i \rangle$ of nodes is half-empty with
respect to $\langle t_i \rangle$ at the start of round $r$, then
$\langle v_i \rangle$ is half-empty with respect to $\langle t_i
\rangle$ at the start of round $r'$ for any $r' \le r$.
\end{lemma}
\begin{proof}
The lemma follows immediately from the fact that if a node $v_i$ is
missing a token $t_j$ at the start of round $r$, then $v_i$ is missing
token $t_j$ at the start of every round $r' < r$.
\end{proof}
Lemmas~\ref{lem:free+edge} and~\ref{lem:monotone} suggest that if we
can identify a token distribution in which all half-empty
configuration are small, we can guarantee small progress in each
round.  We now show that there are many token distributions with this
property, thus yielding the desired lower bound.

\begin{theorem}
\label{thm:alg+lower}
From an initial token distribution in which each node has each token
independently with probability $3/4$, any online token-forwarding
algorithm will need $\Omega(kn/\log n)$ rounds to complete with high
probability against a strong adversary.
\end{theorem}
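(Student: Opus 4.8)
The plan is to combine the two lemmas with a probabilistic analysis of the initial distribution, using the number of missing (node, token) pairs as a progress measure. First I would quantify the total progress any algorithm must make: call a (node, token) pair \emph{missing} if the node lacks the token, so the algorithm terminates only when no pair is missing. Since each node holds each token independently with probability $3/4$, the expected number of missing pairs is $nk/4$, and by a Chernoff bound this count is $\Omega(nk)$ with high probability. Each useful token exchange lets one node acquire exactly one new token, decreasing the number of missing pairs by one; hence any algorithm must perform $\Omega(nk)$ useful token exchanges in total.

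The heart of the argument is to show that the adversary limits the number of useful exchanges to $O(\log n)$ per round. By Lemma~\ref{lem:free+edge}, a round with $m$ useful exchanges forces a half-empty configuration of size at least $m/2+1$ at the start of that round; and by Lemma~\ref{lem:monotone}, any configuration half-empty at some round is already half-empty at the start of the execution (apply the lemma with $r'=1$). So it suffices to prove that, with high probability over the initial distribution, \emph{no} half-empty configuration of size $s_0 = \Theta(\log n)$ exists at the outset, since then every round can contribute only $O(\log n)$ useful exchanges at all later times.

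To bound the probability of a large half-empty configuration I would fix sequences $\langle v_i\rangle$ and $\langle t_i\rangle$ of size $s$ and compute the probability they are half-empty. For each unordered pair $\{i,j\}$ the condition ``$v_i$ misses $t_j$ or $v_j$ misses $t_i$'' fails only when $v_i$ has $t_j$ and $v_j$ has $t_i$, which occurs with probability $(3/4)^2 = 9/16$; so the pair satisfies the half-empty condition with probability $7/16$. The crucial observation is that distinct unordered pairs depend on disjoint indicator variables (the variable ``$v_a$ has $t_b$'' appears only in the constraint for $\{a,b\}$), so these events are mutually independent and the configuration is half-empty with probability exactly $(7/16)^{\binom{s}{2}}$. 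Taking a union bound over at most $n^s k^s \le n^{2s}$ choices of the two sequences (using $n \ge k$), the expected number of half-empty configurations of size $s$ is at most $n^{2s}(7/16)^{\binom{s}{2}}$, whose logarithm is $2s\log n - \Theta(s^2)$ and is strongly negative once $s \ge s_0 = c\log n$ for a suitable constant $c$. A final union bound over the $O(n)$ relevant values of $s$ then shows that with high probability no half-empty configuration of size $s_0$ exists initially.

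Putting the pieces together: with high probability every half-empty configuration has size at most $s_0 = O(\log n)$, so by Lemma~\ref{lem:free+edge} each round yields $O(\log n)$ useful exchanges, while $\Omega(nk)$ useful exchanges are required in total; dividing gives the claimed $\Omega(nk/\log n)$ bound. I expect the main obstacle to be the probabilistic estimate of the third paragraph—specifically, verifying the independence across pairs that produces the clean $(7/16)^{\binom{s}{2}}$ factor, and confirming that the quadratic exponent $\binom{s}{2}$ genuinely dominates the linear $2s\log n$ contribution from the union bound at the threshold $s = \Theta(\log n)$.
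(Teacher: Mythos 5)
Your proposal is correct and follows essentially the same route as the paper's proof: bound the size of any initial half-empty configuration by $O(\log n)$ via a union bound over node/token sequences (both arguments rely on the same independence across pairs; your exact $7/16$ versus the paper's union-bound $1/2$ is immaterial), pull later configurations back to round $1$ with Lemma~\ref{lem:monotone}, and divide the $\Omega(nk)$ total of missing pairs by the $O(\log n)$ per-round progress. The only real deviation is that you apply the Chernoff bound directly to the total count of missing pairs rather than per node, which lets you skip the paper's separate treatment of the case $k < 100\log n$; your extra union bound over sizes $s$ is unnecessary since a subsequence of a half-empty sequence is half-empty.
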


\begin{proof}
We first note that if the number of tokens $k$ is less than $100 \log
n$, then the $\Omega(kn/\log n)$ lower bound is trivially true because
even to disseminate one token it will take $\Omega(n)$ rounds in the
worst-case. Thus, in the following proof, we focus on the case where
$k \ge 100 \log n$.

Let $E_l$ denote the event that there exists a half-empty
configuration of size $l$ at the start of the first round.  For $E_l$
to hold, we need $l$ nodes $v_1, v_2, \dots, v_l$ and $l$ tokens
$t_1, t_2, \dots, t_l$ such that for all $i \neq j$ either $v_i$ is
missing $t_j$ or $v_j$ is missing $t_i$. For a pair of nodes $u$ and
$v$, by union bound, the probability that $u$ is missing $t_v$ or $v$
is missing $t_u$ is at most $1/4+1/4 = 1/2$. Thus, the probability of
$E_l$ can be bounded as follows.
\[ \prob{E_l} \le {n \choose l} \cdot \frac{k!}{(k-l)!} \cdot \rb{\frac{1}{2}}^{l \choose 2} \le n^l \cdot k^l \frac{1}{2^{l(l-1)/2}} \le \frac{2^{2l\log n}}{2^{l(l-1)/2}}. \]
In the above inequality, ${n \choose l}$ is the number of ways of
choosing the $l$ nodes that form the half-empty configuration,
$k!/(k-l)!$ is the number of ways of assigning $l$ distinct tokens,
and $(1/2)^{{l \choose 2}}$ is the upper bound on the probability for
each pair $i \neq j$ that either $v_i$ is missing $t_j$ or $v_j$ is
missing $t_i$.  For $l = 5 \log n$, $\prob{E_l} \le 1/n^2$.  Thus, the
largest half-empty configuration at the start of the first round, and
hence at the start of any round, is of size at most $5 \log n$ with
probability at least $1 - 1/n^2$.  By Lemma~\ref{lem:free+edge}, we
thus obtain that the number of useful token exchanges in each round is
at most $10 \log n$, with probability at least $1 - 1/n^2$.

\junk{Now we argue, in each following rounds, the number of useful token
exchanges is also no more than $2(l-1)$ with high probability, where $l\ge
5\log n$. If there are $2(l-1)$ or more useful token exchanges in
round $r$ where $r>1$, then by Lemma \ref{lem:free+edge} there exist
$l$ nodes $v_1,v_2,\dots,v_l$ and $l$ tokens $t_1,t_2,\dots,t_l$ such
that for all $i \neq j$ either $v_i$ is missing $t_j$ or $v_j$ is
missing $t_i$. Then at the beginning of round 1, the condition that
for all $i \neq j$ either $v_i$ is missing $t_j$ or $v_j$ is missing
$t_i$ still holds, and this can happen only with probability $1/n^2$.}

Let $M_i$ be the number of tokens that node $i$ is missing in the
initial distribution. Then $M_i$ is a binomial random variable with
$\expect{M_i} = k/4$.  By a straightforward Chernoff bound, we have the
probability that node $i$ misses less than $k/8$ tokens is
\[\prob{M_i \le \frac{k}{8}} = \prob{M_i \le \rb{1 - \frac{1}{2}} \cdot \expect{M_i}} \le e^{-\frac{\expect{M_i}\rb{\frac{1}{2}}^2}{2}} = e^{-\frac{k}{32}}. \] 
Therefore, the total number of tokens missing in the initial
distribution is at least $n \cdot k/8 = \Omega(kn)$ with probability
at least $1 - n/e^{\frac{k}{32}} \ge 1 - 1/n^2$ ($k \ge 100 \log n$).
Since the number of useful tokens exchanged in each round is at most
$10 \log n$, the number of rounds needed to complete $k$-gossip is
$\Omega(kn /\log n)$ with high probability.
\end{proof}

Theorem~\ref{thm:alg+lower} does not apply to certain natural initial
distributions, such as one in which each token resides at exactly one
node.  While this class of token distributions has far fewer tokens
distributed initially, the argument of Theorem~\ref{thm:alg+lower}
does not rule out the possibility that an algorithm, when starting
from a distribution in this class, avoids the problematic
configurations that arise in the proof. In the following,
Theorem~\ref{thm:lower.single} extends the lower bound to this class
of distributions.
\begin{lemma}
\label{lem:alg+lower.single}
From any distribution in which each token starts at exactly one node
and no node has more than one token, any online token-forwarding
algorithm for $k$-gossip needs $\Omega(kn/\log n)$ rounds against a
strong adversary.
\end{lemma}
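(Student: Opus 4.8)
The plan is to deduce this from the dense case already settled in Theorem~\ref{thm:alg+lower}, by arguing that starting each node with \emph{fewer} tokens can only make dissemination harder. First I would normalize: any distribution in which each token sits at a distinct node is, up to relabeling of tokens and of nodes, the canonical distribution $D_0$ in which token $i$ starts at node $i$ and nodes $k+1,\dots,n$ start empty. The dense distribution $D$ of Theorem~\ref{thm:alg+lower}, in which every node holds every token independently with probability $3/4$, can be viewed as $D_0$ together with a further collection of initial tokens (formally, take the distribution of Theorem~\ref{thm:alg+lower} conditioned to contain $D_0$, i.e.\ plant token $i$ at node $i$ and draw the rest as before; planting $k$ pairs only shrinks half-empty configurations, so the union bound of Theorem~\ref{thm:alg+lower} still goes through). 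It then suffices to prove a monotonicity statement: handing out extra initial tokens never increases the number of rounds an algorithm needs.

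To prove monotonicity I would couple the two executions. Given an algorithm $A$, run it from the token-richer start $D$ but force each node to ignore its extra initial tokens when choosing a broadcast, so that its broadcast depends only on the tokens it would hold in the matching $D_0$-execution (its $D_0$ token plus everything received so far). Then the broadcasts agree round by round, the adversary may replay one common graph sequence, and a one-line induction in the spirit of Lemma~\ref{lem:monotone} shows that each node's true token set in the $D$-execution always contains its token set in the $D_0$-execution. Hence the $D$-execution terminates no later than the $D_0$-execution, so $A$'s time from $D_0$ is at least its time from $D$; combining with the $\Omega(kn/\log n)$ bound of Theorem~\ref{thm:alg+lower} would give the lemma.

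The step I expect to be the real obstacle is exactly the concern raised just before the lemma: the strong adversary is adaptive, and the coupling must be arranged so that the \emph{hardness} of the dense instance survives it. The tension is that the free-edge strategy which makes $D$ hard is computed from the true token holdings, whereas the coupling above forces the $D$-execution onto the graph sequence generated by the token-poor $D_0$-execution. Under that sequence the half-empty configurations are large and useful exchanges are plentiful (Lemma~\ref{lem:free+edge} gives only a weak bound when configurations are big), so naively the $D$-execution looks fast and no lower bound is inherited; yet if instead one lets the $D$-execution face its own free-edge adversary, its broadcasts and graphs diverge from the $D_0$-execution and the coupling collapses. Bridging this gap---showing that an algorithm starting from the sparse $D_0$ cannot escape the small-config, $O(\log n)$-useful-exchange regime of Theorem~\ref{thm:alg+lower} any faster than it could from $D$, so that its $\Omega(kn)$ initially missing tokens still cost $\Omega(kn/\log n)$ rounds---is where the work lies. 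I would attack it by strengthening the coupling so that the adversary it constructs for $D_0$ is guaranteed, once the unavoidable initial spreading is absorbed, to keep every half-empty configuration of size $O(\log n)$, thereby throttling useful exchanges to $O(\log n)$ per round for the remaining $\Omega(kn)$ deliveries.
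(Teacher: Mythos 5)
Your overall architecture is the same as the paper's: embed the sparse distribution inside the dense one of Theorem~\ref{thm:alg+lower}, run the algorithm from the dense start while ignoring the extra tokens for broadcast decisions, and transfer the lower bound via domination of token sets. (The paper realizes the embedding by sampling a hard dense instance $C^*$ and extracting a perfect matching via Hall's theorem rather than by planting; your planting variant is legitimate, since the existence of a large half-empty configuration is a decreasing event in the token indicators, so conditioning node $i$ to hold token $i$ only helps the union bound, and the count of initially missing tokens drops by only $k$.) The genuine gap is in your final paragraph: you declare the choice of adversary and the survival of the coupling to be an unresolved ``obstacle'' and propose to fix it by forcing the $D_0$-execution itself into the small-half-empty-configuration, $O(\log n)$-useful-exchange regime. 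That program is both unnecessary and not carried out, so the proof as written is incomplete; worse, the dilemma motivating it is a false one.

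Here is the resolution, which is exactly how the paper closes the argument. The modified procedure --- run $A$ from the dense start $D$, but let each node choose its broadcast as a function of only its $D_0$-visible tokens --- is itself a legitimate online token-forwarding algorithm $A^*$ started from $D$. Theorem~\ref{thm:alg+lower} applies to $A^*$ verbatim, with the free-edge adversary computed from the \emph{true} $D$-holdings: the half-empty configurations that throttle useful exchanges are configurations of the true holdings, and those are small precisely because $D$ is dense; it is irrelevant that $A^*$'s broadcasts are dictated by a token-poor simulation. So $A^*$ needs $\Omega(kn/\log n)$ rounds from $D$. Your claim that under this adversary ``the broadcasts and graphs diverge and the coupling collapses'' is incorrect: by construction $A^*$'s broadcast at each node in each round equals $A$'s broadcast in the coupled $D_0$-execution, since both are the same function of the $D_0$-visible state, which evolves identically in the two runs given identical graph sequences and identical broadcasts (a one-line induction). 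Hence the graph sequence produced by the free-edge-on-$D$ adversary is a valid strong-adversary strategy against $A$ started from $D_0$ (the adversary maintains the fictitious $D$-holdings on the side), the $D$-holdings always contain the $D_0$-holdings, and $A$ cannot finish from $D_0$ before $A^*$ finishes from $D$. This yields $T_{D_0} \ge T_{D} = \Omega(kn/\log n)$ with no need to control half-empty configurations of the sparse execution at all.
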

\begin{proof}
We consider an initial distribution $C$ where each token is at exactly
one node, and no node has more than one token. Let $C^*$ be an initial
token distribution from which any online algorithm needs
$\Omega(kn/\log n)$ rounds.  The existence of $C^*$ follows from
Theorem~\ref{thm:alg+lower}.  We construct a bipartite graph on two
copies of $V$, $V_1$ and $V_2$. A node $v \in V_1$ is connected to a
node $u \in V_2$ if in $C^*$ $u$ has all the tokens that $v$ has in
$C$.  We will show below that this bipartite graph has a perfect
matching with positive probability.

Given a perfect matching $M$, we can complete the proof as follows.
For $v\in V_2$, let $M(v)$ denote the node in $V_1$ that got matched
to $v$.  If there is an algorithm $A$ that runs in $T$ rounds from
starting state $C$, then we can construct an algorithm $A^*$ that runs
in the same number of rounds from starting state $C^*$ as
follows. First every node $v$ deletes all its tokens except for those
which $M(v)$ has in $C$. Then algorithm $A^*$ runs exactly as $A$.
Thus, the lower bound of Theorem~\ref{thm:alg+lower}, which applies to
$A^*$, also applies to $A$.

It remains to prove that the above bipartite graph has a perfect
matching.  This follows from an application of Hall's Theorem.
Consider a set of $m$ nodes in $V_2$. We want to show their
neighborhood in the bipartite graph is of size at least $m$. We show
this condition holds by the following 2 cases. If $m < 3n/5$, let
$X_i$ denote the neighborhood size of node $i$. We know $\expect{X_i}
\ge 3n/4$. Then by Chernoff bound
\[\prob{X_i < m} \le \prob{X_i < 3n/5} \le e^{-\frac{\rb{1/5}^2 \expect{X_i}}{2}} = e^{-\frac{3n}{200}}\]
By union bound with probability at least $1-n\cdot e^{-3n/200}$ the
neighborhood size of every node is at least $m$. Therefore, the
condition holds in the first case. If $m \ge 3n/5$, we argue the
neighborhood size of any set of $m$ nodes is $V_1$ with high
probability. Consider a set of $m$ nodes, the probability that a given
token $t$ is missing in all these $m$ nodes is $(1/4)^m$. Thus the
probability that any token is missing in all these nodes is at most
$n(1/4)^m \le n(1/4)^{3n/5}$. There are at most $2^n$ such sets. By
union bound, with probability at least $1-2^n\cdot n(1/4)^{3n/5} =
1-n/2^{n/5}$, the condition holds in the second case.  
\end{proof}

\begin{theorem}
\label{thm:lower.single}
From any distribution in which each token starts at exactly one node,
any online token-forwarding algorithm for $k$-gossip needs
$\Omega(kn/\log n)$ rounds against a strong adversary.
\end{theorem}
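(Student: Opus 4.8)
The plan is to prove Theorem~\ref{thm:lower.single} by generalizing the reduction used for Lemma~\ref{lem:alg+lower.single}, replaying it against the hard random distribution of Theorem~\ref{thm:alg+lower} rather than against a single-token distribution. Fix an arbitrary initial distribution $D$ in which each token sits at exactly one node, and let $D(v)$ denote the set of tokens held by $v$ in $D$; a node may now hold several tokens, but since each token is placed exactly once we always have $\sum_{v} |D(v)| = k \le n$, so on average a node holds at most one token. The goal is to show that any algorithm $A$ solving $D$ in $T$ rounds can be converted, with no loss in round count, into an algorithm solving the random distribution $C^*$ of Theorem~\ref{thm:alg+lower}; since $C^*$ forces $\Omega(kn/\log n)$ rounds, the same bound then applies to $A$.

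The reduction is a token-deletion simulation set up through a matching, exactly as in the proof of Lemma~\ref{lem:alg+lower.single}. I would form a bipartite graph on two copies $V_1$ (for $D$) and $V_2$ (for $C^*$) of the node set, joining $v \in V_1$ to $u \in V_2$ whenever $C^*(u) \supseteq D(v)$, i.e.\ when the random distribution places at $u$ every token that $D$ places at $v$. A matching saturating the $C^*$-side assigns to each node $u$ of $C^*$ a distinct partner $M(u) \in V_1$ with $C^*(u) \supseteq D(M(u))$; then $u$ deletes all tokens outside $D(M(u))$, adopts the identity of $M(u)$, and runs $A$ verbatim. Because the emulated nodes start in exactly the configuration $D$ and broadcast exactly what $A$ broadcasts, the adversary's responses are identical round for round, so after $T$ rounds every emulated $D$-node — hence every node of $C^*$ — holds all $k$ tokens. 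Invoking Theorem~\ref{thm:alg+lower} gives $T = \Omega(kn/\log n)$, which is the claim.

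The step I expect to be the main obstacle is establishing, via Hall's theorem, that the required matching exists, and the difficulty is concentrated precisely at nodes of $D$ that carry many tokens. A node $v$ with $|D(v)| = d_v$ is adjacent only to those $u$ with $C^*(u) \supseteq D(v)$, an event of probability $(3/4)^{d_v}$, so heavily loaded nodes have sparse neighborhoods and threaten Hall's condition. The plan to overcome this is to exploit $\sum_v |D(v)| = k \le n$: loaded nodes are few, and empty nodes (of which there are many) are adjacent to \emph{every} node of $C^*$, so the expansion needed to saturate $V_2$ should come from the abundance of lightly loaded nodes. Concretely I would split the verification of Hall's condition into the small-subset case, bounded by a Chernoff/union estimate on neighborhood sizes as in Lemma~\ref{lem:alg+lower.single}, and the large-subset case, bounded by the probability that some token is absent from an entire set of $C^*$-nodes; the delicate part is phrasing these estimates in terms of the multiplicities $\{d_v\}$ and checking that the total failure probability stays below $1$, so that a valid matching — and therefore the whole reduction — is guaranteed to exist.
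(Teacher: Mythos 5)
Your reduction has a genuine gap that no refinement of the Hall's-condition case analysis can close. The matching you need is a \emph{perfect} matching between $V_1$ and $V_2$ (since $|V_1|=|V_2|=n$, saturating the $C^*$-side forces saturating the $D$-side as well), so in particular every node $v\in V_1$ must have at least one admissible partner $u$ with $C^*(u)\supseteq D(v)$. For a node $v$ holding $d_v$ tokens, the probability that a fixed $u$ qualifies is $(3/4)^{d_v}$, so the expected number of admissible partners is $n(3/4)^{d_v}$, which is $o(1)$ as soon as $d_v > \log_{4/3} n \approx 2.41\log n$. Theorem~\ref{thm:lower.single} must cover, e.g., the distribution in which a single node holds all $k\ge 100\log n$ tokens; for that instance, with high probability \emph{no} node of $C^*$ contains all of $D(v)$, the bipartite graph has an isolated vertex on the $V_1$ side, and no perfect matching exists. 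This is a Hall violation at a singleton, so the abundance of empty nodes in $D$ (which indeed have full neighborhoods) is irrelevant --- it cannot supply a partner for the overloaded node itself. Your closing plan of ``phrasing the estimates in terms of the multiplicities $\{d_v\}$'' therefore cannot succeed: the failure is not a union-bound bookkeeping issue but a structural impossibility of the token-deletion simulation whenever some node carries $\omega(\log n)$ tokens.

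The paper sidesteps exactly this obstacle by \emph{not} extending the matching argument to loaded nodes. It splits into two cases on the number of occupied nodes: if at least $n/2$ nodes hold a token, it designates one distinct token per occupied node and applies the Lemma~\ref{lem:alg+lower.single} argument to those $n/2$ singleton tokens (yielding $\Omega(n^2/\log n)=\Omega(kn/\log n)$ since $k\le n$); if fewer than $n/2$ nodes are occupied, the adversary merges all occupied nodes into a ``super node'' attached to the rest by a single edge, so this component contributes at most one useful exchange per round, and the lower bound is driven by disseminating the $k$ tokens to the $\ge n/2$ initially empty nodes, again via the argument of Lemma~\ref{lem:alg+lower.single}. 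Some such special treatment of heavily loaded nodes is necessary; your proposal as written does not contain it.
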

\begin{proof}
In this theorem, we extend our proof in
Lemma~\ref{lem:alg+lower.single} to the inital distibution $C$ where
each token starts at exactly one node, but nodes may have multiple
tokens. We prove this theorem by the following two cases.

First case, when at least $n/2$ nodes start with some token. This
implies that $k\ge n/2$. Focus on the $n/2$ nodes with tokens. Each of
them has at least one unique token. By the same argument used in
Lemma~\ref{lem:alg+lower.single}, disseminating these $n/2$ distinct
tokens to $n$ nodes takes $\Omega(n^2/\log n)$ rounds. Thus, in this
case the number of rounds needed is $\Omega(kn/\log n)$.

Second case, when less than $n/2$ nodes start with some token. In this
case, the adversary can group these nodes together, and treat them as
one super node. There is only one edge connecting this super node to
the rest of the nodes. Thus, the number of useful token exchange
provided by this super node is at most one in each round. If there
exsits an algorithm that can disseminate $k$ tokens in $o(kn/\log n)$
rounds, then the contribution by the super node is $o(kn/\log n)$. And
by the same argument used in Lemma~\ref{lem:alg+lower.single} we know
dissemination $k$ tokens to $n/2$ nodes (those start with no tokens)
takes $\Omega(kn/\log n)$ rounds. Thus, the theorem also holds in this
case.
\end{proof}

\section{Subquadratic time offline token-forwarding algorithms}
\label{sec:centralized}
In this section, we give two centralized algorithms for the $k$-gossip
problem in the offline model. We present an $O(\min\{n^{1.5}\sqrt{\log
  n}, nk\})$ round algorithm in Section \ref{sec:upper}. Then we
present a bicriteria $\rb{O(n^\epsilon), \log n}$-approximation
algorithm in Section \ref{sec:approx}, which means if $L$ is the
number of rounds needed by an optimal algorithm where one token is
broadcast by every node per round, then our approximation algorithm
will complete in $O(n^\epsilon L)$ rounds and the number of tokens
broadcast by any node is $O(\log n)$ in any given round. Both of these
algorithms uses a directed capacitated leveled graph constructed from
the sequence of communication graphs which we call the {\em evolution
  graph}.

\smallskip
\noindent
{\em Evolution graph}: Let $V$ be the set of nodes. Consider a dynamic
network of $l$ rounds numbered $1$ through $l$ and let $G_i$ be the
communication graph for round $i$. The evolution graph for this
network is a directed capacitated graph $G$ with $2l+1$ levels
constructed as follows. We create $2l+1$ copies of $V$ and call them
$V_0, V_2, \dots, V_{2l}$. $V_i$ is the set of nodes at level $i$ and
for each node $v$ in $V$, we call its copy in $V_i$ as $v_i$. For $i =
1, \ldots, l$, level $2i-1$ corresponds to the beginning of round $i$
and level $2i$ corresponds to the end of round $i$. Level $0$
corresponds to the network at the start. Note that the end of a
particular round and the start of the next round are represented by
different levels. There are three kinds of edges in the graph. First,
for every round $i$ and every edge $(u,v) \in G_i$, we place two
directed edges with unit capacity each, one from $u_{2i-1}$ to
$v_{2i}$ and another from $v_{2i-1}$ to $u_{2i}$. We call these edges
{\em broadcast edges} as they will correspond to broadcasting of
tokens; the unit capacity on each such edge will ensure that only one
token can be sent from a node to a neighbor in one round. Second, for
every node $v$ in $V$ and every round $i$, we place an edge with
infinite capacity from $v_{2(i-1)}$ to $v_{2i}$. We call these edges
{\em buffer edges} as they ensure tokens can be stored at a node from
the end of one round to the end of the next. Finally, for every node
$v \in V$ and every round $i$, we also place an edge with unit
capacity from $v_{2(i-1)}$ to $v_{2i-1}$. We call these edges as {\em
  selection edges} as they correspond to every node selecting a token
out of those it has to broadcast in round $i$; the unit capacity
ensures that in a given round a node must send the same token to all
its neighbors. Figure \ref{fig:evolution} illustrates our
construction, and Lemma~\ref{lem:level.steiner} explains its
usefulness.

\begin{figure}[ht]
\begin{center}
\includegraphics[width=5in]{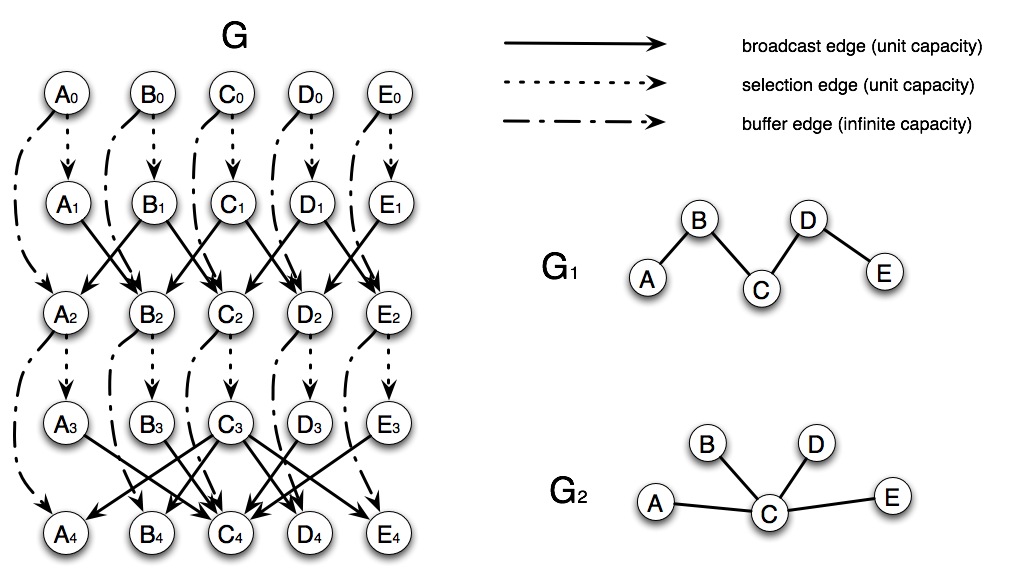}
\caption{An example of how to construct the evolution graph from a
  sequence of communication graphs.}
\label{fig:evolution}
\end{center}
\end{figure}

\begin{lemma}
\label{lem:level.steiner}
Let there be $k$ tokens, each with a source node where it is present
in the beginning and a set of destination nodes to whom we want to
send it. It is feasible to send all the tokens to all of their
destination nodes in a dynamic network using $l$ rounds, where in each
round a node can broadcast only one token to all its neighbors, if and
only if $k$ directed Steiner trees can be packed in the corresponding
evolution graph with $2l + 1$ levels respecting the edge capacities,
one for each token with its root being the copy of the source node at
level $0$ and its terminals being the copies of the destination nodes
at level $2l$.
\end{lemma}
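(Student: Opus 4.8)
The plan is to prove the two directions of the equivalence separately, in each case by an induction on the round number that matches the statement ``node $v$ holds token $t$ at the end of round $i$'' with the statement ``the level-$2i$ copy $v_{2i}$ lies in (is reached by) the structure built for $t$ inside the evolution graph.'' The single structural fact that drives everything is that the evolution graph is leveled with all edges pointing to strictly higher levels, and that the \emph{only} edge entering an odd-level vertex $u_{2i-1}$ is its selection edge $u_{2(i-1)}\to u_{2i-1}$. Since this edge has unit capacity, at most one token can ever occupy $u_{2i-1}$, and this is exactly the encoding of the constraint that $u$ broadcasts a single token in round $i$ and sends that same token to all its neighbors via the broadcast edges out of $u_{2i-1}$.

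For the forward direction, given a feasible $l$-round token-forwarding schedule, I would, for each token $t$ with source $s_t$, build a subgraph $U_t$ of the evolution graph tracing the dissemination of $t$: include the selection edge into $u_{2i-1}$ and the broadcast edge $u_{2i-1}\to v_{2i}$ whenever $u$ broadcasts $t$ in round $i$ and a neighbor $v$ thereby acquires $t$, and include the buffer edge $v_{2(i-1)}\to v_{2i}$ whenever $v$ retains $t$ across round $i$. A short induction on $i$ shows that for every node $v$ holding $t$ at the end of round $i$ there is a directed path from the level-$0$ copy of $s_t$ to $v_{2i}$ inside $U_t$; the inductive step splits into the ``already had it, use the buffer edge'' case and the ``newly received it from a broadcasting neighbor, use selection-then-broadcast edges'' case. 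Pruning $U_t$ to an arborescence rooted at the level-$0$ copy of $s_t$ that reaches all required terminal copies yields the Steiner tree for $t$, since every destination in $D_t$ is held at the end of round $l$ and hence its level-$2l$ copy is reached. Finally I would verify capacities: a selection edge of $u$ in round $i$, or a broadcast edge out of $u_{2i-1}$, is placed into $U_t$ only for the unique token $u$ broadcasts in round $i$, so each unit-capacity edge is used by at most one tree, while buffer edges are uncapacitated.

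For the reverse direction, given $k$ trees $T_t$ respecting capacities, I would define the schedule by the rule that $u$ broadcasts $t$ in round $i$ exactly when $T_t$ uses the selection edge $u_{2(i-1)}\to u_{2i-1}$; unit capacity on that edge makes this well defined, i.e.\ at most one token per node per round. Correctness hinges on the invariant, proved by induction on $i$, that $v_{2i}\in T_t$ implies $v$ holds $t$ at the end of round $i$ in the constructed schedule. The base case uses that the only level-$0$ vertex of an arborescence rooted at the source copy is the root itself. In the inductive step I would use that $v_{2i}$ has a unique parent in the tree: if it is the buffer edge, monotonicity of token forwarding gives the claim; if it is a broadcast edge from $u_{2i-1}$, then, because the selection edge is the only way into $u_{2i-1}$, the tree must use $u$'s selection edge, so $u$ broadcasts $t$ in round $i$, and by the inductive hypothesis at level $2(i-1)$ the node $u$ already holds $t$, so $v$ legitimately receives it. The same invariant at level $2l$ shows every terminal ends up with its token, and the odd-level part of the argument simultaneously certifies that the schedule only ever broadcasts tokens a node actually possesses.

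The main obstacle is getting the capacity-to-broadcast correspondence exactly right at the odd selection levels and threading the induction through them, rather than any hard estimate. Concretely, the delicate point is that the unit-capacity selection edge is the sole gateway into $u_{2i-1}$, which is what simultaneously enforces ``one token broadcast per round'' and ``the same token reaches all neighbors,'' and what lets the reverse-direction induction conclude that a node broadcasting $t$ must already hold $t$. Care is also needed to confirm that pruning the reachability subgraph to an arborescence in the forward direction preserves capacity feasibility, and that extra tokens a node may receive (neighbors whose level-$2i$ copies are not in $T_t$) never invalidate the token-forwarding schedule.
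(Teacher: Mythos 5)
Your proposal is correct and follows essentially the same route as the paper's proof: both directions rest on the correspondence between broadcast schedules and edge-disjoint use of the unit-capacity selection/broadcast edges, with the same induction invariant that $v_{2i}$ belongs to the tree for $t$ iff $v$ holds $t$ at the end of round $i$. The only cosmetic difference is that you build a forward reachability subgraph and prune it to an arborescence, whereas the paper constructs the tree directly by backward induction from the destination copies at level $2l$; both yield the same capacity-respecting packing.
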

\begin{proof}
Assume that $k$ tokens can be sent to all of their destinations in $l$
rounds and fix one broadcast schedule that achieves this. We will
construct $k$ directed Steiner trees as required by the lemma based on
how the tokens reach their destinations and then argue that they all
can be packed in the evolution graph respecting the edge
capacities. For a token $i$, we construct a Steiner tree $T^i$ as
follows.  For each level $j \in \{0, \ldots, 2l\}$, we define a set
$S^i_j$ of nodes at level $j$ inductively starting from level $2l$
backwards.  $S^i_{2l}$ is simply the copies of the destination nodes
for token $i$ at level $2l$. Once $S^i_{2(j+1)}$ is defined, we define
$S^i_{2j}$ (respectively $S^i_{2j+1}$) as: for each $v_{2(j+1)} \in
S^i_{2(j+1)}$, include $v_{2j}$ (respectively nothing) if token $i$
has reached node $v$ after round $j$, or include a node $u_{2j}$
(respectively $u_{2j+1}$) such that $u$ has token $i$ at the end of
round $j$ which it broadcasts in round $j+1$ and $(u,v)$ is an edge of
$G_{j+1}$. Such a node $u$ can always be found because whenever
$v_{2j}$ is included in $S^i_{2j}$, node $v$ has token $i$ by the end
of round $j$ which can be proved by backward induction staring from $j
= l$. It is easy to see that $S^i_0$ simply consists of the copy of
the source node of token $i$ at level $0$. $T^i$ is constructed on the
nodes in $\cup_{j = 0}^{j = 2l} S^i_j$. If for a vertex $v$,
$v_{2(j+1)} \in S^i_{2(j+1)}$ and $v_{2j} \in S^i_{2j}$, we add the
buffer edge $(v_{2j},v_{2(j+1)})$ in $T^i$. Otherwise, if $v_{2(j+1)}
\in S^i_{2(j+1)}$ but $v_{2j} \notin S^i_{2j}$, we add the selection
edge $(u_{2j},u_{2j+1})$ and broadcast edge $(u_{2j+1},v_{2(j+1)})$ in
$T^i$, where $u$ was the node chosen as described above. It is
straightforward to see that these edges form a directed Steiner tree
for token $i$ as required by the lemma which can be packed in the
evolution graph. The argument is completed by noting that any unit
capacity edge cannot be included in two different Steiner trees as we
started with a broadcast schedule where each node broadcasts a single
token to all its neighbors in one round, and thus all the $k$ Steiner
trees can be simultaneously packed in the evolution graph respecting
the edge capacities.

Next assume that $k$ Steiner trees as in the lemma can be packed in
the evolution graph respecting the edge capacities. We construct a
broadcast schedule for each token from its Steiner tree in the natural
way: whenever the Steiner tree $T_i$ corresponding to token $i$ uses a
broadcast edge $(u_{2j-1},v_{2j})$ for some $j$, we let the node $u$
broadcast token $i$ in round $j$. We need to show that this is a
feasible broadcast schedule. First we observe that two different
Steiner trees cannot use two broadcast edges starting from the same
node because every selection edge has unit capacity, thus there are no
conflicts in the schedule and each node is asked to broadcast at most
one token in each round. Next we claim by induction that if node
$v_{2j}$ is in $T^i$, then node $v$ has token $i$ by the end of round
$j$. For $j = 0$, it is trivial since only the copy of the source node
for token $i$ can be included in $T^i$ from level $0$. For $j > 0$, if
$v_{2j}$ is in $T^i$, we must reach there by following the buffer edge
$(v_{2(j-1)},v_{2j})$ or a broadcast edge $(u_{2j-1},v_{2j})$. In the
former case, by induction node $v$ has token $i$ after round $j-1$
itself. In the latter case, node $u$ which had token $i$ after round
$j-1$ by induction was the neighbor of node $v$ in $G_j$ and $u$
broadcast token $i$ in round $j$, thus implying node $v$ has token $i$
after round $j$. From the above claim, we conclude that whenever a
node is asked to broadcast a token in round $j$, it has the token by
the end of round $j-1$. Thus the schedule we constructed is a feasible
broadcast schedule. Since the copies of all the destination nodes of a
token at level $2l$ are the terminals of its Steiner tree, we conclude
all the tokens reach all of their destination nodes after round $l$.
\end{proof}

\begin{figure}[ht]
\begin{center}
\includegraphics[width=5in]{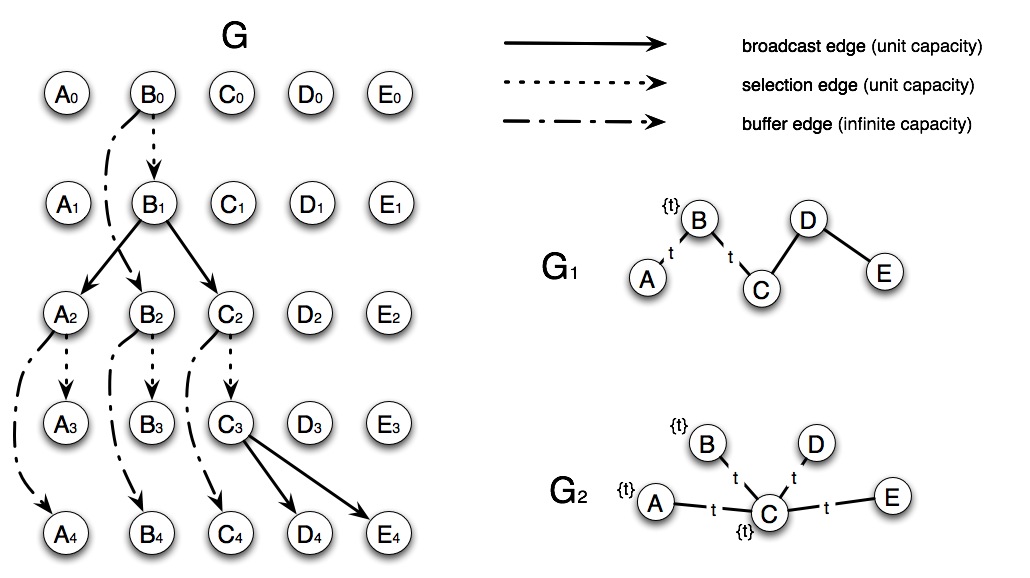}
\caption{An example of building directed Steiner tree in the evolution
  graph $G$ based on token dissemination process. Token $t$ starts
  from node $B$. Thus, the Steiner tree is rooted at $B_0$ in
  $G$. Since $B_0$ has token $t$, we include the infinite capacity
  buffer edge $(B_0,B_2)$. In the first round, node $B$ broadcasts
  token $t$, and hence we include the selection edge
  $(B_0,B_1)$. Nodes $A$ and $C$ receive token $t$ from $B$ in the
  first round, so we include edges $(B_1,A_2)$, $(B_1,C_2)$. Now
  $A_2$, $B_2$, and $C_2$ all have token $t$. Therefore we include the
  edges $(A_2,A_4)$, $(B_2,B_4)$, and $(C_2,C_4)$. In the second
  round, all of $A$, $B$, and $C$ broadcast token $t$, we include
  edges $(A_2,A_3)$, $(B_2,B_3)$, $(C_2,C_3)$. Nodes $D$ and $E$
  receive token $t$ from $C$. So we include edges $(C_3,D_4)$ and
  $(C_3,E_4)$. Notice that nodes $A$ and $B$ also receive token $t$
  from $C$, but they already have token $t$. Thus, we don't include
  edges $(C_3,B_4)$ or $(C_3,A_4)$.}
\label{fig:steiner}
\end{center}
\end{figure}

\subsection{An $O(\min\{n\sqrt{k\log n}, nk\})$ round algorithm}
\label{sec:upper}
Our algorithm is given in Algorithm~\ref{alg:flow_based} and analyzed
in Lemma~\ref{lem:level.flow} and~\ref{thm:flow_based}.

\begin{lemma}
\label{lem:level.flow}
Let there be $k \leq n$ tokens at given source nodes and let $v$ be an
arbitrary node. Then, all the tokens can be sent to $v$ using
broadcasts in $O(n)$ rounds.
\end{lemma}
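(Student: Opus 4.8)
The plan is to reduce the claim to a maximum-flow computation in the evolution graph and then to bound the relevant minimum cut using connectivity. Since every token has the single destination $v$, a directed Steiner tree rooted at a source copy with terminal set $\{v_{2l}\}$ is just a directed path from that source copy to $v_{2l}$. Hence, by Lemma~\ref{lem:level.steiner}, all $k$ tokens can be delivered to $v$ in $l$ rounds if and only if $k$ capacity-respecting paths from the source copies at level $0$ to $v_{2l}$ can be packed in the evolution graph on $2l+1$ levels. Adding a super-source $\sigma$ with a unit-capacity edge to each token's source copy, this is exactly the statement that the maximum $\sigma$-to-$v_{2l}$ flow equals $k$; since all capacities are integral, it suffices (by max-flow/min-cut and integrality of flows) to show that every cut separating $\sigma$ from $v_{2l}$ has capacity at least $k$ once $l = O(n)$.

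So fix such a cut $(\mathcal S,\mathcal T)$ of finite capacity. Because buffer edges have infinite capacity, no buffer edge may cross from $\mathcal S$ to $\mathcal T$; reading the buffer edges of a fixed node $w$ forward in time, this forces the even-level copies of $w$ to be labelled by a block of $\mathcal T$ followed by a block of $\mathcal S$. Writing $P_i$ for the set of nodes whose level-$2i$ copy lies in $\mathcal T$, we therefore get a nested chain $P_0 \supseteq P_1 \supseteq \cdots \supseteq P_l$ with $v \in P_l$. I will analyze the cut in terms of this chain: cutting the super-source edge of a token costs $1$ and corresponds to putting that token's source in $P_0$, while for each round $i$ the cheapest way to seal off $P_i$ (choosing the odd-level labels optimally) charges exactly one unit for every node $w \notin P_{i-1}$ that has a neighbour in $P_i$ in $G_i$ — paying for its selection edge rather than for all the broadcast edges into $P_i$. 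Thus the capacity of the cut is at least $|Q| + \sum_{i=1}^{l} |\partial_i|$, where $Q$ is the set of tokens whose source copy is in $\mathcal T$ and $\partial_i = \{\, w \notin P_{i-1} : w \text{ has a neighbour in } P_i \text{ in } G_i \,\}$.

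The heart of the argument is a connectivity claim: in any round $i$ with $\partial_i = \emptyset$ the region strictly shrinks, i.e.\ $P_{i-1} \supsetneq P_i$. Indeed, if some token's source is not in $Q$, then $P_i$ is a nonempty proper subset of $V$ (it contains $v$ but not that source), so connectivity of $G_i$ yields an edge from $V\setminus P_i$ into $P_i$; its endpoint $y \notin P_i$ must lie in $P_{i-1}$ (else it would belong to $\partial_i$), giving $y \in P_{i-1}\setminus P_i$. Since the chain is nested, strict shrinkage can happen at most $|P_0| - |P_l| \le n-1$ times, so at most $n-1$ rounds can have $\partial_i = \emptyset$ and every other round contributes at least $1$ to the sum. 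Consequently, either every source lies in $Q$ (so $|Q| = k$ and we are done), or the shrinking bound applies and $\sum_i|\partial_i| \ge l-(n-1)$; taking $l = n-1+k = O(n)$ (using $k \le n$) makes the cut capacity at least $k$ in both cases, which is what we need. The step I expect to require the most care is the exact cut accounting in the second paragraph — in particular verifying that the optimal odd-level (selection versus broadcast) labelling reduces the per-round cost to the single quantity $|\partial_i|$, and handling the super-source edges so that the two cases ($Q$ containing all sources versus not) combine cleanly.
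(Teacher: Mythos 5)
Your proposal is correct and follows essentially the same route as the paper's proof: reduce via Lemma~\ref{lem:level.steiner} to packing $k$ source-to-$v$ paths, add a super-source, and lower-bound the min cut by using the infinite-capacity buffer edges to get a nested chain of level sets and the per-round connectivity of $G_i$ to charge one unit of cut capacity to each round in which the chain does not strictly shrink. Your accounting of selection versus broadcast edges is a more explicit version of the paper's cut argument, and your choice $l = n-1+k$ matches the paper's $l = n+k$ up to an additive constant.
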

\begin{proof}
By lemma~\ref{lem:level.steiner}, we will be done in $n + k$ rounds if
we can show that $k$ paths, one from every source vertex at level $0$
to $v_{2(n+k)}$, can be packed in the corresponding evolution graph
with $2(n+k) + 1$ levels respecting the edge capacities. For this, we
consider the evolution graph and add to it a special vertex $v_{-1}$
at level $-1$ and connect it to every source at level $0$ by an edge
of capacity 1. (Multiple edges get fused with corresponding increase
in capacity if multiple tokens have the same source.) We claim that
the value of the min-cut between $v_{-1}$ and $v_{2(n+k)}$ is at least
$k$. Before proving this, we complete the proof of the claim assuming
this.  By the max flow min cut theorem, the max flow between $v_{-1}$
and $v_{2(n+k)}$ is at least $k$. Since we connected $v_{-1}$ with
each of the $k$ token sources at level $0$ by a unit capacity edge, it
follows that unit flow can be routed from each of these sources at
level $0$ to $v_{2(n+k)}$ respecting the edge capacities. It is easy
to see that this implies we can pack $k$ paths, one from every source
vertex at level $0$ to $v_{2(n+k)}$, respecting the edge capacities.

To prove our claimed bound on the min cut, consider any cut of the
evolution graph separating $v_{-1}$ from $v_{2(n+k)}$ and let $S$ be
the set of the cut containing $v_{-1}$. If $S$ includes no vertex from
level $0$, we are immediately done. Otherwise, observe that if $v_{2j}
\in S$ for some $0 \leq j < (n+k)$ and $v_{2(j+1)} \notin S$, then the
value of the cut is infinite as it cuts the buffer edge of infinite
capacity out of $v_{2j}$. Thus we may assume that if $v_{2j} \in S$,
then $v_{2(j+1)} \in S$. Also observe that since each of the
communication graphs $G_1, \ldots, G_{n+k}$ are connected, if the
number of vertices in $S$ from level $2(j+1)$ is no more than the
number of vertices from level $2j$ and not all vertices from level
$2(j+1)$ are in $S$, we get at least a contribution of $1$ in the
value of the cut. But since the total number of nodes is $n$ and
$v_{2(n+k)} \notin S$, there must be at least $k$ such levels, which
proves the claim.
\end{proof}

\begin{algorithm}[ht!]
\caption{$O(\min\{n \sqrt{k\log n}, nk\})$ round algorithm in the
  offline model}
\label{alg:flow_based}
\begin{algorithmic}[1]
  \REQUIRE A sequence of communication graphs $G_i$, $i = 1, 2, \ldots$
  \ENSURE Schedule to disseminate $k$ tokens.

  \medskip

  \IF{$k \leq \sqrt{\log n}$}

  \FOR{each token $t$} \label{alg.step:flow_based.trivial}

  \STATE For the next $n$ rounds, let every node who has token
  $t$ broadcast the token.

  \ENDFOR 

  \ELSE

  \STATE Choose a set $S$ of $2\sqrt{k \log n}$ random nodes. \label{alg.step:random}
  
  \FOR{each vertex in $v \in S$} \label{alg.step:flow_based.phase_1}

  \STATE Send each of the $k$ tokens to vertex $v$ in $O(n)$ rounds. 

  \ENDFOR

  \FOR{each token $t$} \label{alg.step:flow_based.phase_2}

  \STATE For the next $2n \sqrt{(\log n)/k}$ rounds, let every node who has token
  $t$ broadcast the token.

  \ENDFOR

  \ENDIF

\end{algorithmic}
\end{algorithm}

\begin{theorem}
\label{thm:flow_based}
Algorithm~\ref{alg:flow_based} solves the $k$-gossip problem using
$O(\min\{n \sqrt{k \log n}, nk\})$ rounds with high probability in
the offline model.
\end{theorem}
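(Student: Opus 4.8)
The plan is to handle the algorithm's two branches separately and to isolate the correctness of the flooding phase as the one nontrivial ingredient. For the branch $k \le \sqrt{\log n}$, I would simply observe that flooding a single token on a connected graph infects at least one new node in every round in which not all nodes hold it; hence $n$ rounds suffice to spread each token to every node, and running the $k$ tokens one after another costs $nk$ rounds, which equals $\min\{nk, n\sqrt{k\log n}\}$ in this regime. No randomness is needed here.

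For the branch $k > \sqrt{\log n}$, the round count is immediate bookkeeping: Phase~1 invokes Lemma~\ref{lem:level.flow} once per node of $S$, costing $|S|\cdot O(n) = O(n\sqrt{k\log n})$ rounds and leaving every node of $S$ holding all $k$ tokens; Phase~2 floods each of the $k$ tokens for $\tau := 2n\sqrt{(\log n)/k}$ rounds, costing $k\tau = 2n\sqrt{k\log n}$ rounds. Summing gives $O(n\sqrt{k\log n})$, matching the claimed bound. The only thing left to prove is that Phase~2 actually delivers every token to every node with high probability.

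The crux is a reverse-reachability argument. Fix a token $t$ and a target node $w$, and let $H_1,\dots,H_\tau$ be the connected graphs of $t$'s flooding block. Since an infected node stays infected, $w$ receives $t$ within the block if and only if some initially-infected node can reach $w$ by a temporal path using rounds $1,\dots,\tau$ in increasing order; equivalently, if and only if the initial source set meets the set $R_w$ of nodes possessing such a temporal path to $w$. As the initial source set contains $S$ (every node of $S$ holds $t$ after Phase~1), it suffices that $S\cap R_w\neq\emptyset$. To lower bound $|R_w|$ I would run the flooding backward from $w$ through $H_\tau,H_{\tau-1},\dots,H_1$: reversing temporal paths shows this backward-reachable set is exactly $R_w$, and because each $H_i$ is connected, the backward-infected set gains at least one node per round until it equals $V$. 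Hence $|R_w|\ge \min\{\tau,n\}$.

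Finally I would plug in the parameters. If $\tau\ge n$ then $R_w=V$ and $w$ is reached unconditionally; otherwise $|R_w|\ge\tau$, and since $S$ is a uniform random set of size $s := 2\sqrt{k\log n}$,
\[ \prob{S\cap R_w=\emptyset} \le \rb{1-\frac{\tau}{n}}^{s} \le e^{-s\tau/n} = e^{-4\log n} = \frac{1}{n^4}, \]
using the identity $s\tau = 4n\log n$. A union bound over the at most $nk\le n^2$ token/target pairs then shows that every token reaches every node with probability at least $1-1/n^2$, completing the proof. I expect the backward-reachability step---establishing the equivalence with $S\cap R_w\neq\emptyset$ and that connectivity forces $|R_w|\ge\tau$---to be the main obstacle; the round counts and the final probability bound are routine once $s$ and $\tau$ are chosen so that $s\tau=\Theta(n\log n)$.
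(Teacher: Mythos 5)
Your proposal is correct and follows essentially the same route as the paper: Lemma~\ref{lem:level.flow} for Phase~1, then for each token--node pair a reachability set of size at least $2n\sqrt{(\log n)/k}$ (your $R_w$ is the paper's $S^t_v$) that the random set $S$ must hit, with the same $1/n^4$ failure bound and a union bound. The only difference is cosmetic: you spell out the backward-flooding argument for $|R_w|\ge\min\{\tau,n\}$, which the paper asserts without proof, and you bound the miss probability by $(1-\tau/n)^s$ rather than the ratio of binomial coefficients.
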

\begin{proof}
It is trivial to see that if $k \leq \sqrt{\log n}$, then the
algorithm will end in $nk$ rounds and each node receives all the $k$
tokens. Assume $k > \sqrt{\log n}$. By Lemma~\ref{lem:level.flow}, all
the tokens can be sent to all the nodes in $S$ using $O(n \sqrt{k \log
  n})$ rounds. Now fix a node $v$ and a token $t$. Since token $t$ is
broadcast for $2n \sqrt{(\log n)/k}$ rounds, there is a set $S^t_v$ of
at least $2n \sqrt{(\log n)/k}$ nodes from which $v$ is reachable
within those rounds.  It is clear that if $S$ intersects $S^t_v$, $v$
will receive token $t$. Since the set $S$ was picked uniformly at
random, the probability that $S$ does not intersect $S^t_v$ is at most
\[ \frac{{n - 2n\sqrt{(\log n)/k} \choose 2\sqrt{k \log n}}}{{n \choose 2\sqrt{k \log n}}} < \left(\frac{n - 2n\sqrt{(\log n)/k}}{n}\right)^{2\sqrt{k \log n}} \le \frac{1}{n^4}. \] 
Thus every node receives every token with probability $1-1/n^3$. It is
also clear that the algorithm finishes in $O(n \sqrt{k \log n})$
rounds.
\end{proof}

Algorithm~\ref{alg:flow_based} can be derandomized using the standard
technique of conditional expectations, shown in
Algorithm~\ref{alg:derandomize}. Given a sequence of communication
graphs, if node $u$ broadcasts token $t$ for $\Delta$ rounds and every
node that receives token $t$ also broadcasts $t$ during that period,
then we say node $v$ is within $\Delta$ {\em broadcast distance} to
$u$ if and only if $v$ receives token $t$ by the end of round
$\Delta$. Let $S$ be a set of nodes, and $|S|\le 2 \sqrt{k \log
  n}$. We use $\dprob{u}{S}{T}$ to denote the probability that the
broadcast distance from node $u$ to set $X$ is greater than $2n
\sqrt{(\log n)/k}$, where $X=S\cup \cub{\mbox{pick }2\sqrt{k\log n} -
  |S|\mbox{ nodes uniformly at random from }V\setminus T}$, and
$\dsumprob{S}{T}$ denotes the sum, over all $u$ in $V$, of
$\dprob{u}{S}{T}$.

\begin{algorithm}[ht!]
\caption{Derandomized algorithm for Step~\ref{alg.step:random} in
  Algorithm~\ref{alg:flow_based}}
\label{alg:derandomize}
\begin{algorithmic}[1]
  \REQUIRE A sequence of communication graphs $G_i$, $i = 1, 2,
  \ldots$, and $k \ge \sqrt{\log n}$

  \ENSURE A set of $2\sqrt{k\log n}$ nodes $S$ such that the broadcast
  distance from every node $u$ to $S$ is within $2 n\sqrt{(\log
    n)/k}$.
  \medskip

  \STATE Set $S$ and $T$ be $\emptyset$.

  \FOR{each $v\in V$}

  \STATE $T = T \cup \{v\}$
  
  \IF{$\dsumprob{S \cup \{v\}}{T} \le \dsumprob{S}{T}$ \label{alg.step:cal}}

  \STATE $S = S\cup \{v\}$

  

  \ENDIF

  \ENDFOR

  \RETURN $S$

\end{algorithmic}
\end{algorithm}

\begin{lemma}
The set $S$ returned by Algorithm~\ref{alg:derandomize} contains at
most $2\sqrt{k\log n}$ nodes, and the broadcast distance from every
node to $S$ is at most $2n\sqrt{(\log n)/k}$.
\end{lemma}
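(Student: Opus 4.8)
The plan is to read Algorithm~\ref{alg:derandomize} as a textbook application of the method of conditional expectations, using the quantity $\dsumprob{S}{T} = \sum_{u \in V} \dprob{u}{S}{T}$ as the conditional ``failure'' potential. Write $\Delta = 2n\sqrt{(\log n)/k}$ for the target broadcast distance and $s = 2\sqrt{k \log n}$ for the target set size. I would first pin down the base case. When $S = T = \emptyset$, the random set $X$ is a uniformly random $s$-subset of $V$, so $\dprob{u}{\emptyset}{\emptyset}$ is precisely the probability that the random set in Theorem~\ref{thm:flow_based} fails to intersect the set of nodes that can reach $u$ by broadcasting within $\Delta$ rounds. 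That reachability set has size at least $\Delta$ (exactly as $S^t_v$ does in Theorem~\ref{thm:flow_based}), so the identical computation gives $\dprob{u}{\emptyset}{\emptyset} \le (1 - \Delta/n)^{s} \le 1/n^4$, and summing over the $n$ nodes yields $\dsumprob{\emptyset}{\emptyset} \le 1/n^3 < 1$ (this step also uses $s \le n$, which follows from $k \le n$, so the random set exists).

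The core of the argument is a monotonicity invariant: no iteration increases $\dsumprob{S}{T}$. When $v$ is added to $T$, producing $T' = T \cup \{v\}$, I would condition the random completion $R$ (a uniform $m$-subset of $V \setminus T$, with $m = s - |S|$) on whether it contains $v$. A uniform $m$-subset conditioned on containing $v$ is $\{v\}$ together with a uniform $(m-1)$-subset of $V \setminus T'$, while conditioned on excluding $v$ it is a uniform $m$-subset of $V \setminus T'$; these are exactly the completions defining $\dsumprob{S\cup\{v\}}{T'}$ and $\dsumprob{S}{T'}$. The law of total probability therefore yields the convex identity
\[ \dsumprob{S}{T} = p \cdot \dsumprob{S\cup\{v\}}{T'} + (1-p) \cdot \dsumprob{S}{T'}, \qquad p = \frac{m}{|V \setminus T|}. \]
Since a convex combination is at least the minimum of its parts, and Step~\ref{alg.step:cal} retains whichever of the two options has the smaller potential, the value kept after processing $v$ is at most $\dsumprob{S}{T}$. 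Chaining over all $n$ iterations gives $\dsumprob{S}{V} \le \dsumprob{\emptyset}{\emptyset} < 1$ for the returned set.

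To finish, I would maintain the size invariant $0 \le m = s - |S| \le |V \setminus T|$ jointly with the monotonicity. The bound $m \ge 0$ is the desired $|S| \le s$; the bound $m \le |V \setminus T|$ keeps the random completion well defined and, at $T = V$, forces $|S| = s$ exactly. I would verify the invariant by induction, splitting on $p$: when $p < 1$ there is slack and adding $v$ to $T$ preserves $m \le |V \setminus T'|$; when $p = 1$ (the pool equals the sample) the identity collapses to $\dsumprob{S}{T} = \dsumprob{S\cup\{v\}}{T'}$ and the algorithm is forced to include $v$, keeping the invariant tight. Once $T = V$ there is no residual randomness, so $X = S$ is deterministic and each $\dprob{u}{S}{V} \in \{0,1\}$ is the indicator that $u$ is \emph{not} within broadcast distance $\Delta$ of $S$; as $\dsumprob{S}{V}$ is then an integer strictly below $1$, it equals $0$, i.e.\ every node lies within broadcast distance $2n\sqrt{(\log n)/k}$ of $S$.

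The main obstacle I anticipate is not the conditional-expectation skeleton but the bookkeeping that keeps the sampling well defined at the boundary regimes. One must check that the forced-inclusion case $p=1$ coincides exactly with the case where the ``do not add'' branch requires choosing $m$ nodes from a pool of size $m-1$ and is hence infeasible, so that the comparison in Step~\ref{alg.step:cal} (treating the infeasible option as $+\infty$) correctly adds $v$; symmetrically, when $m=0$ the ``add'' branch is infeasible and must be rejected. Confirming that the convex decomposition's conditioning remains valid across these transitions, and that the two invariants stay coupled so the returned set has size exactly $s$, is where the care is needed.
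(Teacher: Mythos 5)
Your proof is correct and follows essentially the same route as the paper's: the method of conditional expectations driven by the convex decomposition $\dsumprob{S}{T} = p\,\dsumprob{S\cup\{v\}}{T\cup\{v\}} + (1-p)\,\dsumprob{S}{T\cup\{v\}}$, the base-case bound $\dsumprob{\emptyset}{\emptyset} \le 1/n^3$ imported from Theorem~\ref{thm:flow_based}, and the integrality argument at $T=V$ forcing $\dsumprob{S}{V}=0$. Your explicit identification of $p = m/|V\setminus T|$ via conditioning on whether the random completion contains $v$, and your bookkeeping of the size invariant and the infeasible branches at the boundary, spell out details the paper only asserts ("there exists a real $0\le p\le 1$", "$|S_l|$ is the right size by construction"), but the argument is the same.
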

\begin{proof}
Let us view the process of randomly selecting $2\sqrt{k\log n}$ nodes
as a computation tree. This tree is a complete binary tree of height
$n$. There are $n+1$ nodes on any root-leaf path. The level of a node
is its distance from the root. The computation starts from the
root. Each node at the $i$th level is labeled by $b_i \in \{0,1\}$,
where 0 means not including node $i$ in the final set and 1 means
including node $i$ in the set. Thus, each root-leaf path, $b_1b_2\dots
b_n$, corresponds to a selection of nodes.  For a node $a$ in the
tree, let $S_a$ (resp., $T_a$) denote the sets of nodes that are
included (resp., lie) in the path from root to $a$.

By Theorem~\ref{thm:flow_based}, we know that for the root node $r$,
we have $\dsumprob{\emptyset}{S_r} =\dsumprob{\emptyset}{\emptyset}\le
1/n^3$.  If $c$ and $d$ are the children of $a$, then $T_c$ = $T_d$,
and there exists a real $0 \le p \le 1$ such that for each $u$ in $V$,
$\dprob{u}{S_a}{T_a}$ equals $p \dprob{u}{S_c}{T_c} +
(1-p)\dprob{u}{S_d}{T_d}$.  Therefore, $\dsumprob{S_a}{T_a}$ equals $p
\dsumprob{S_c}{T_c} + (1-p) \dsumprob{S_d}{T_d}$.  We thus obtain that
$\min\{\dsumprob{S_c}{T_c},\dsumprob{S_d}{T_d}\} \le
\dsumprob{S_a}{T_a}$.  Since we set $S$ to be $X$ in $\{S_c, S_d\}$
that minimizes $\dsumprob{X}{T_c}$, we maintain the invariant that
$\dsumprob{S}{T} \le 1/n^3$.  In particular, when the algorithm
reaches a leaf $l$, we know $\dsumprob{S_l}{V}\le 1/n^3$.  But a leaf
$l$ corresponds to a complete node selection, so that
$\dprob{u}{S_l}{V}$ is 0 or 1 for all $u$, and hence
$\dsumprob{S_l}{V}$ is an integer.  We thus have $\dsumprob{S_l}{V} =
0$, implying that the broadcast distance from node $u$ to set $S_l$ is
at most $2n \sqrt{(\log n)/k}$ for every $l$.  Furthermore, $|S_l|$ is
$2 k \sqrt{\log n}$ by construction.

Finally, note that Step~\ref{alg.step:cal} of
Algorithm~\ref{alg:derandomize} can be implemented in polynomial time,
since for each $u$ in $V$, $\dprob{u}{S}{T}$ is simply the ratio of
two binomial coefficients with a polynomial number of bits.  Thus,
Algorithm~\ref{alg:derandomize} is a polynomial time algorithm with
the desired property.
\end{proof}

\subsection{An $\rb{O(n^\epsilon), \log n}$-approximation algorithm}
\label{sec:approx}
Here we introduce an $\rb{O(n^\epsilon), \log n}$-approximation
algorithm for the $k$-gossip problem in the offline model. This means,
if the $k$-gossip problem can be solved on any $n$-node dynamic
network in $L$ rounds, then our algorithm will solve the $k$-gossip
problem on any dynamic network in $O(n^\epsilon L)$ rounds, assuming
each node is allowed to broadcast $O(\log n)$ tokens, instead of one,
in each round.  Our algorithm is an LP based one, which makes use of
the evolution graph defined earlier. The following is a
straightforward corollary of Lemma~\ref{lem:level.steiner}.

\begin{corollary}
\label{cor:level.steiner}
The $k$-gossip problem can be solved in $l$ rounds if $k$ directed
Steiner trees can be packed in the corresponding evolution graph,
where for each token, the root of its Steiner tree is a source node at
level 0, and the terminals are all the nodes at level $2l$.
\end{corollary}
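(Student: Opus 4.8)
The plan is to obtain this corollary as an immediate specialization of Lemma~\ref{lem:level.steiner}, so essentially no new work is required — the statement is a direct instantiation of the ``if'' direction of that lemma. First I would recall the setup of Lemma~\ref{lem:level.steiner}: it gives an exact characterization (an if-and-only-if) of when $k$ tokens can be disseminated in $l$ rounds, in terms of whether $k$ directed Steiner trees can be packed in the corresponding evolution graph respecting edge capacities, where each tree is rooted at the level-$0$ copy of its token's source and has as terminals the level-$2l$ copies of that token's destination nodes.

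The key observation is that the $k$-gossip problem is precisely the special case of the general dissemination problem in Lemma~\ref{lem:level.steiner} in which \emph{every} node is a destination for \emph{every} token. Thus I would apply the lemma with the destination set of each token taken to be all of $V$. Under this choice, the terminals of the Steiner tree for each token become exactly the copies at level $2l$ of \emph{all} nodes in $V$, which matches the statement of the corollary verbatim. Since we only need the direction asserting that packability of the trees \emph{implies} a feasible schedule (the ``if'' direction), I would simply invoke that half of Lemma~\ref{lem:level.steiner}: if the $k$ Steiner trees can be packed respecting capacities, the lemma constructs a feasible broadcast schedule in which each node broadcasts at most one token per round and every token reaches all of its terminals, i.e.\ all nodes, by the end of round $l$. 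That is exactly a solution to $k$-gossip in $l$ rounds.

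There is no substantive obstacle here, since the corollary is strictly weaker than Lemma~\ref{lem:level.steiner} (it drops the ``only if'' direction and fixes the destination sets). The only thing to be careful about is to state the terminal set correctly — all copies at level $2l$, not a proper subset — and to note that the root of each tree is the appropriate source at level $0$, so that the invocation of the lemma is syntactically valid. Hence the proof is a single sentence applying the ``if'' direction of Lemma~\ref{lem:level.steiner} with all nodes designated as destinations for every token.
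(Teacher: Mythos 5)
Your proposal is correct and matches the paper's intent exactly: the paper offers no separate proof, simply noting that the statement is a straightforward corollary of Lemma~\ref{lem:level.steiner}, obtained precisely as you describe by taking every node as a destination for every token and invoking the ``if'' direction. Nothing further is needed.
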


Packing Steiner trees in general directed graphs is NP-hard to
approximate even within $\Omega(m^{1/3-\epsilon})$ for any
$\epsilon>0$ \cite{cheriyan+s:steiner}, where $m$ is the number of
edges in the graph. Thus, our algorithm focuses on solving Steiner
tree packing problem with relaxation on edge capacities, allowing the
capacity to blow up by a factor of $O(\log n)$. First, we write down
the LP for the Steiner tree packing problem (maximizing the number of
Steiner trees packed with respect to edge capacities). Let $\cal T$ be
the set of all possible Steiner trees, and $c_e$ be the capacity of
edge $e$. For each Steiner tree $T\in \cal T$, we associate a variable
$x_T$ with it. If $x_T=1$, then Steiner tree $T$ is in the optimal
solution; if $x_T=0$, it's not. After relaxing the integral
constraints on $x_T$'s, we have the following LP, referred to as $\cal
P$ henceforth. Let $F(\cal P)$ denote the optimal fractional solution
for $\cal P$.
\junk{
\begin{eqnarray*}
\max & \sum_{T\in \cal T} x_{T}\\
\mbox{s.t.}& \sum_{T:e\in T} x_{T} \le c_{e} \,\, \forall e\in E \\
 & x_{T} \ge 0 \,\, \forall T\in \cal T
\end{eqnarray*}
}
\[
\begin{array}{rrr}
\max & \sum_{T\in \cal T} x_{T} & \\
\mbox{s.t.} & \sum_{T:e\in T} x_{T} \le c_{e}  & \,\, \forall e\in E \\
 & x_{T} \ge 0  & \,\, \forall T\in \cal T
\end{array}
\]
\junk{
\begin{eqnarray*}
\min & \sum_{e\in E} c_e y_e \\
\mbox{s.t.} & \sum_{e\in T} y_e \ge 1 \,\, \forall T\in \cal T \\
 & y_e \ge 0 \,\, \forall e\in E
\end{eqnarray*}
}

\junk{
\begin{lemma}[\cite{jain+ms:steiner}]
\label{thm:approx-steiner}
There is an $\alpha$-approximation algorithm for the fractional
maximum Steiner tree packing problem if and only if there is an
$\alpha$-approximation algorithm for the minimum-weight Steiner tree
problem.
\end{lemma}
\textcolor{red}{Directed or undirected?}
Charikar et al. \cite{charikar+ccdgg:steiner} gives an
$O(n^\epsilon)$-approximation algorithm for the minimum-weight
directed Steiner tree problem. This together with Lemma
\ref{thm:approx-steiner} implies,
}

\begin{lemma}[\cite{cheriyan+s:steiner}]
\label{thm:approx-pack-steiner}
There is an $O(n^\epsilon)$-approximation algorithm for the fractional
maximum Steiner tree packing problem in directed graphs.
\end{lemma}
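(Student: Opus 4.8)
The plan is to reduce the fractional maximum Steiner tree packing problem to the minimum-weight directed Steiner tree problem via LP duality, and then invoke the $O(n^\eps)$-approximation of~\cite{charikar+ccdgg:steiner} for the latter. The key structural observation is that the packing LP $\cal P$ has only polynomially many constraints (one per edge) but exponentially many variables (one per tree $T \in \cal T$), so it is natural to operate on its dual and to generate the relevant trees on demand.

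First I would form the dual of $\cal P$, a fractional covering LP with one variable $y_e$ per edge:
\[
\begin{array}{rrr}
\min & \sum_{e\in E} c_e y_e & \\
\mbox{s.t.} & \sum_{e\in T} y_e \ge 1 & \,\, \forall T\in \cal T \\
 & y_e \ge 0 & \,\, \forall e\in E.
\end{array}
\]
By strong LP duality its optimum equals $F(\cal P)$. The next step is to recognize that the separation problem for this dual is exactly the minimum-weight directed Steiner tree problem: given nonnegative edge weights $\langle y_e \rangle$, some constraint is violated iff there is a tree $T$ with $\sum_{e \in T} y_e < 1$, i.e.\ iff the minimum-weight Steiner tree under $\langle y_e \rangle$ has weight below $1$. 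Hence an $\alpha$-approximation algorithm for min-weight directed Steiner tree is precisely an $\alpha$-approximate separation oracle for the dual, and~\cite{charikar+ccdgg:steiner} supplies such an oracle with $\alpha = O(n^\eps)$.

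Given the approximate oracle, I would invoke the standard machinery that converts approximate separation into an approximate LP solution --- either the multiplicative-weights (Garg--K\"onemann) framework for fractional packing, or equivalently the ellipsoid method run with an approximate separation oracle. Concretely: maintain edge lengths $y_e$, repeatedly call the $O(n^\eps)$-approximate min-weight Steiner tree routine to produce an approximately shortest tree, augment the primal packing along that tree, and update the lengths multiplicatively; terminate when the total length certifies near-optimality. The analysis yields a fractional packing of value within an $O(n^\eps)$ factor of $F(\cal P)$ together with a matching (to the same factor) dual certificate, in a polynomial number of iterations, each performing one polynomial-time oracle call. Since the packing produced has polynomially many nonzero $x_T$ (one per iteration), this is the desired polynomial-time $O(n^\eps)$-approximation.

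The main obstacle is to confirm that the multiplicative error of the oracle passes through the iterative framework without further blowup, so that an $\alpha$-approximate shortest-tree oracle gives an $O(\alpha)$-approximate packing rather than an extra logarithmic factor. The delicate point is bounding the width of the packing LP and checking that the Garg--K\"onemann potential argument (equivalently, the Gr\"otschel--Lov\'asz--Schrijver guarantee for optimization from approximate separation) degrades gracefully when the generated column is only approximately minimum-weight; once this is in place, absorbing the accuracy slack into the already-arbitrary exponent $\eps$ completes the proof.
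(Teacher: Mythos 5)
Your argument is correct and is essentially the standard proof behind this cited result: the paper itself gives no proof (it invokes \cite{cheriyan+s:steiner} directly), and the dual-separation reduction you describe --- min-weight directed Steiner tree as an $O(n^\eps)$-approximate separation oracle for the covering dual, fed through Garg--K\"onemann/ellipsoid-style machinery so the oracle's factor passes through undamaged --- is exactly the Jain--Mahdian--Salavatipour equivalence combined with the Charikar et al.\ directed Steiner tree approximation, which is how the cited work obtains the bound. No gaps; the one delicate point you flag (no extra blowup from approximate separation) is precisely the content of that equivalence and is handled there.
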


Let $L$ be the number of rounds that an optimal algorithm uses with
every node broadcasting at most one token per round. We give an
algorithm that takes $O(n^\epsilon L)$ rounds with every node
broadcasting $O(\log n)$ tokens per round. Thus ours is an
$\rb{O(n^\epsilon), O(\log n)}$ bicriteria approximation algorithm,
shown in Algorithm \ref{alg:approx}.
\begin{algorithm}[ht!]
\caption{$\rb{O(n^\epsilon), O(\log n)}$-approximation
  algorithm}
\label{alg:approx}
\begin{algorithmic}[1]
  \REQUIRE A sequence of communication graphs $G_1,G_2,\dots$
  \ENSURE Schedule to disseminate $k$ tokens.

  \medskip

  \STATE Initialize the set of Steiner trees ${\cal S} = \emptyset$.

  \FOR{$i = 1 \to 2n^\epsilon$}

  \STATE Find $L^*$ such that with the evolution graph $G$ constructed
  from level $0$ to level $2L^*$, the approximate value for $F(\cal
  P)$ is $k/n^{\epsilon}$. In this step, we use the algorithm of
  \cite{cheriyan+s:steiner} to approximate $F(\cal
  P)$. \label{alg.step:lp}

  \STATE Let $x^*_T$ be the value of the variable $x_T$ in the
  solution from step \ref{alg.step:lp}. The number of non-zero
  $x^*_T$'s is polynomial with respect to $k$. Using randomized
  rounding, with probability $x^*_T$ include $T$ in the solution,
  ${\cal S} = {\cal S} \cup \{T\}$. Otherwise, don't include
  $T$. \label{alg.step:round}

  \STATE Remove communication graphs $G_1,G_2,\dots,G_{L^*}$ from the
  sequence, and reduce the remaining graphs' indices by $L^*$.

  \ENDFOR

  \STATE Use Corollary \ref{cor:level.steiner} to convert the set of
  Steiner trees $\cal S$ into a token dissemination
  schedule. \label{alg.step:convert}
\end{algorithmic}
\end{algorithm}

\begin{theorem}
\label{thm:approx}
Algorithm \ref{alg:approx} achieves an $O(n^\epsilon)$ approximation
to the $k$-gossip problem while broadcasting $O(\log n)$ tokens per
round per node, with high probability.
\end{theorem}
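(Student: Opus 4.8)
The plan is to analyze Algorithm~\ref{alg:approx}, which runs for $2n^\epsilon$ iterations, each packing Steiner trees covering a $1/n^\epsilon$ fraction of the tokens over a prefix of the evolution graph of length $L^*$. Let me sketch the argument.
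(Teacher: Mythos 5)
Your proposal stops before any actual argument is made: it restates the structure of Algorithm~\ref{alg:approx} (that it runs for $2n^\epsilon$ iterations, each targeting a $k/n^\epsilon$ fractional packing over a prefix of length $L^*$) and then announces a sketch that never appears. As written there is nothing to verify, so the gap is the entire proof. Concretely, three separate claims must be established, and none of them is addressed.

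First, \emph{correctness}: you must show that after all $2n^\epsilon$ iterations the rounded collection $\mathcal{S}$ contains at least $k$ Steiner trees, so that every token gets a tree. The paper does this by noting that each iteration contributes fractional mass $k/n^\epsilon$ in expectation, so the total expected number of rounded trees is $2k$, and a Chernoff bound gives $|\mathcal{S}| \ge k$ with probability at least $1 - e^{-k/4}$. Second, \emph{round count}: you must justify that $L^* \le L$, where $L$ is the optimum. This follows because in an optimal schedule of length $L$ the evolution graph admits an integral packing of $k$ trees, hence $F(\mathcal{P}) \ge k$ there, so the $O(n^\epsilon)$-approximation algorithm of~\cite{cheriyan+s:steiner} already certifies value $k/n^\epsilon$ by level $2L$; summing over $2n^\epsilon$ iterations gives $2n^\epsilon L^* \le 2n^\epsilon L$ rounds. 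Third, \emph{congestion}: randomized rounding can violate the unit capacity constraints, and you must bound the violation. Since each edge constraint has fractional load at most $1$, a Chernoff bound shows the rounded load exceeds $O(\log n)$ with probability at most $n^{-\log\log n}$, and a union bound over the polynomially many edges finishes. Without all three of these steps the theorem is not proved; your proposal supplies none of them.
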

\begin{proof}
We show the following three claims: (i) In Step
\ref{alg.step:convert}, $|{\cal S}| \ge k$ with probability at least
$1-1/e^{k/4}$. This is the correctness of Algorithm \ref{alg:approx},
saying it can find the schedule to disseminate all $k$ tokens. (ii)
The number of rounds in the schedule produced by Algorithm
\ref{alg:approx} is at most $O(n^\epsilon)$ times the optimal
one. (iii) In the token dissemination schedule, the number of tokens
sent over an edge is $O(\log n)$ in any round with high probability.

First, we prove claim (i). Let $X_i$ denote the sum of non-zero
$x^*_T$'s in iteration $i$. $X=\sum_{i=1}^{2n^\epsilon} X_i$. We know
$\expect{X_i} = k/n^{\epsilon}$. Thus, $\expect{X} = 2n^\epsilon
k/n^{\epsilon} = 2k$, which is the expected number of Steiner trees in
set $\cal S$. By Chernoff bound, we have 
\[\prob{X \le k} = \prob{X \le \rb{1-\frac{1}{2}}\expect{X}} \le e^{-\frac{\rb{1/2}^2 \expect{X}}{2}} = e^{-\frac{\rb{1/2}^2   \cdot 2k}{2}} = \frac{1}{e^{k/4}}\]
Thus, $|{\cal S}| \ge k$ with probability at least $1-1/e^{k/4}$ in
Step \ref{alg.step:convert}.

Next we prove claim (ii). Let $L$ denote the number of rounds needed
by an optimal algorithm. Since in Step \ref{alg.step:lp} we used the
$O(n^\epsilon)$-approximation algorithm in \cite{cheriyan+s:steiner}
to solve $F(\cal P)$, we know $L^* \le L$. There are $2n^\epsilon$
iterations. Thus, the number of rounds needed by Algorithm
\ref{alg:approx} is at most $2n^\epsilon L^* \le 2n^\epsilon L$, which
is an $O(n^\epsilon)$-approximation on the number of rounds.

Lastly we prove claim (iii). When Algorithm \ref{alg:approx} does
randomized rounding in Step \ref{alg.step:round}, some constraint
$\sum_{T:e\in T} x_{T} \le c_{e}$ in $\cal P$ may be violated. In the
evolution graph, $c_{e} = 1$. Let $Y$ denote the sum of $x^*_T$'s in
this constraint. We have $\expect{Y}\le c_e = 1$. By Chernoff bound,
\begin{eqnarray*}
\prob{Y\ge \expect{Y} + \log n} &=& \prob{Y \ge \rb{1+\frac{\log n}{\expect{Y}}} \expect{Y}} \\
 &\le& e^{-\expect{Y}\sqb{\rb{1+\frac{\log n}{\expect{Y}}} \ln \rb{1+\frac{\log n}{\expect{Y}}} - \frac{\log n}{\expect{Y}}}} \le \frac{1}{n^{\log\log n}} \\
\end{eqnarray*}
Thus, the number of tokens sent over a given edge is $O(\log n)$ with
probability at least $1-1/n^{\log\log n}$. Since there are only
polynomial number of edges, no edge will carry more than $O(\log n)$
tokens in a single round with high probability.
\end{proof}


\section{Conclusion and open questions}
In this paper, we studied the power of token-forwarding algorithms for
gossip in dynamic networks. We showed a lower bound of $\Omega(nk/\log
n)$ rounds for any online token forwarding algorithm against a strong
adversary; our bound matches the known upper bound of $O(nk)$ up to a
logarithmic factor.  This leaves us with an important open question:
what is the complexity of randomized online token-forwarding
algorithms against a weak adversary that is unaware of the randomness
used by the algorithm in each round?  We note that our lower bound
also extends to randomized algorithms if the adversary is allowed to
be adaptive; that is, the adversary is allowed to make its decision in
each step with knowledge of the random coin tosses made by the
algorithm in that step (but without knowledge of the randomness used
in future steps).  Furthermore, for small token sizes (e.g., $O(\log
n)$ bits) even the best (randomized) online algorithm we know based on
network coding takes $O(nk/\log n)$ rounds~\cite{haeupler+k:dynamic}.
In contrast, we show that in the offline setting there exist
centralized token-forwarding algorithms that run in
$O(n^{1.5}\sqrt{\log n})$ time.  An interesting open problem is to
obtain tight bounds on offline token-forwarding algorithms.


\section*{Acknowledgements}
\label{sec:ack}
We are grateful to Bernhard Haeupler and Fabian Kuhn for several
helpful discussions and comments on an earlier draft of the paper.  We
especially thank Bernhard for pointing out an improved analysis of
Algorithm~\ref{alg:flow_based} which yielded the current bound.

\bibliographystyle{plain}
\bibliography{refs}

\end{document}